\documentclass[12pt]{article}%
\usepackage{amsfonts}
\usepackage{amsmath}
\usepackage{amssymb}
\usepackage{graphicx}%
\providecommand{\U}[1]{\protect\rule{.1in}{.1in}}
\newtheorem{theorem}{Theorem}
\newtheorem{acknowledgement}[theorem]{Acknowledgement}

\newtheorem{definition}[theorem]{Definition}

\newtheorem{lemma}[theorem]{Lemma}

\newtheorem{proposition}[theorem]{Proposition}
\newtheorem{remark}[theorem]{Remark}

\newenvironment{proof}[1][Proof]{\noindent\textbf{#1.} }{\ \rule{0.5em}{0.5em}}
\begin{document}

\title{Generalized Marginals of tie -Wigner Distribution nand Lagrangian Tomography}
\author{Maurice A. de Gosson\\University of Vienna\\Faculty of Mathematics (NuHAG)\\1090 Vienna, AUSTRIA}
\maketitle

\begin{abstract}
We develop a symplectic formulation of the Lagrangian Radon transform based on
the transitive action of the symplectic group on the Lagrangian Grassmannian.
This geometric framework naturally extends the classical notion of the
marginals of the Wigner transform, viewed as elementary tomograms, to
arbitrary Lagrangian subspaces. The resulting generalized tomography provides
a unified treatment of position and momentum measurements related by
symplectic rotations. As applications, we revisit the Pauli reconstruction
problem and establish reconstruction formulas for density operators from
generalized Lagrangian tomograms.

\end{abstract}

\textbf{Keywords}: Wigner function; Radon transform; quantum tomography;
Lagrangian plane; density operator

\textbf{MSC classification 2020}: 52A20, 52A05, 81S10, 42B35

\section{Introduction}

In \cite{Michor} Marmo et \textit{al. }introduced a functional transformation
which they call \textquotedblleft Lagrangian Radon transform\textquotedblright%
; they construct an operator$\mathit{,}$ which sends a function on
$\mathbb{R}^{2rn}$ to its integrals over all affine Lagrangian subspaces in
$\mathbb{R}^{2rn}$. In this context also see the work \cite{manko,mankobis} by
Man'ko and his collaborators. Their constructions use heavy machinery from
symplectic geometry, in particular properties of the affine symplectic group.
whose action on the affine Lagrangian Grassmannian is studied. This allows
these authors to prove deep and difficult results involving at some point the
Siegel half-plane. The role of the unitary subgroup appears to have received
comparatively little attention, but this fact drastically simplifies the study
of the Lagrangian Radon transform. In the present paper we redefine the
Lagrangian Radon transform in terms of the identification of the unitary group
with a maximally compact subgroup of the symplectic group, and recover its
properties; in particular a simple inversion formula is proposed. This
approach was already proposed in our paper \cite{entropy}, and the present
work substantially extends these results.

\section{Background}

The idea of using partial data\ to reconstruct a function goes back to the
1917 work \cite{Radon} by J. Radon. His original aim was to reconstruct a
function of two variables $x,p$ from its integrals on arbitrary straight lines
$ax+bp=X$ in the $x,p$ plane. In two papers Ibort \textit{et al}.
\cite{Ibort,Ibortbis} considerably refine this approach and set out to study
what they call the \textquotedblleft tomographic picture of quantum
mechanics\textquotedblright\ by using a version of the Radon transform they
define by the somewhat heuristic formula%
\begin{equation}
R_{\rho}(X,a,b)=\int_{\mathbb{R}^{2}}\rho(x,p)\delta(X-ax-bp)dpdx
\label{Radon1}%
\end{equation}
where $\rho$ is a quasi-probability \ and $a,b$ arbitrary real numbers. Of
course such a definition is \textit{stricto sensu }usable only under certain
smoothness conditions on $\rho$. While the papers \cite{Ibort,Ibortbis} make
heavy use of representation theory and are accessible to a physically-minded
readership, we will in this paper study the Radon transform on $\mathbb{R}%
^{n}$ from the point of view of harmonic analysis by making systematically use
the theory of the metaplectic group together with the elementary theory of
Lagrangian subspaces of the standard symplectic space. This new approach
allows us to give rigorous proofs and give new insights on the theory of the
Radon transform. Here is the main idea in a nutshell. In \cite{Pauli} W. Pauli
asked whether one can determine a function $\psi\in L^{2}(\mathbb{R}^{n})$
only knowing the squared moduli $|\psi(x)|^{2}$ and $|\widehat{\psi}(p)|^{2}$
($\widehat{\psi}$ the Fourier transform of $\psi$). The answer is
\textquotedblleft no\textquotedblright\ as can be shown by producing
elementary counterexamples. The next step is to remark that if we assume in
addition $\psi,\widehat{\psi}\in L^{1}(\mathbb{R}^{n})$ then $|\psi(x)|^{2}$
and $|\widehat{\psi}(p)|^{2}$ are the marginal distributions of the Wigner
transform of $\psi$:%
\begin{equation}
\int_{\mathbb{R}^{n}}W\psi(x,p)dp=|\psi(x)|^{2}\text{ },\text{ }%
\int_{\mathbb{R}^{n}}W\psi(x,p)dx=|\widehat{\psi}(p)|^{2}. \label{wigmarg}%
\end{equation}

The quantum covariance matrix is an essential object in quantum mechanics and
quantum information science, used for analyzing and characterizing various
properties of quantum systems. For instance, it allows the characterization of
Gaussian states (which are uniquely determined by their covariance matrix), to
prove entanglement, \ to detect noise and dissipation in quantum systems. It
also provides, in quantum experiments, the covariance matrix provides a
practical method to measure and analyze global observables.

In our study of the reconstruction of Gaussian states \cite{gopol1} we
introduced a variant of the Radon transform \cite{Radon}. This transform,
which we call Lagrangian Radon transform was defined as follows:

Recall that the Wigner transform of a function $\psi\in L^{2}(\mathbb{R}^{n})$
is defined by
\begin{equation}
W\psi(x,p)=\left(  \frac{1}{2\pi\hbar}\right)  ^{n}\int_{\mathbb{R}^{n}%
}e^{-\frac{i}{\hbar}p\cdot y}\psi(x+\tfrac{1}{2}y)\overline{\psi(x-\tfrac
{1}{2}y)}dy \label{Wigner}%
\end{equation}
and if $\psi,\widehat{\psi}\in L^{1}(\mathbb{R}^{n})\cap L^{2}(\mathbb{R}%
^{n})$ it satisfies the marginal properties (\ref{wigmarg}). These formulas
can be interpreted as the integrals of the Wigner transform $W\psi$ along the
coordinates planes $\ell_{X}=\mathbb{R}^{n}\times0$ and $\ell_{P}%
=0\times\mathbb{R}^{n}$, respectively. The idea underlying the Radon transform
is now to calculate the integrals of $W\psi$ along \textit{all} Lagrangian
subspaces $\ell$ of the symplectic space $\mathbb{R}^{n}\times\mathbb{R}^{n}$
(i.e. subspaces obtained by symplectic rotations from $\ell_{X}$ (or $\ell
_{P}$); the knowledge of all these integrals then allow the determination of
$\psi$. Now, the main observation is that these integrals can be defined by
\[
R(U,\psi)(x)=\int_{\mathbb{R}^{n}}W\psi(U^{-1}(x,p))dp
\]
where $U$ is a symplectic rotation taking $\ell_{X}$ to $\ell$; using the
symplectic covariance property$W\psi\circ U^{-1}=W(\widehat{U}\psi)$ where
$\widehat{U}$ is a metaplectic operator covering $U$ we get%
\[
R(U,\psi)(x)=\int_{\mathbb{R}^{n}}W(\widehat{U}\psi)(x,p))dp=|\widehat{U}%
\psi(x)|^{2}%
\]
which is what we call the \textit{Lagrangian Radon transform} of $\psi$. So,
viewed from the metaplectic perspective, the Radon transform is nothing more
than the datum of all the marginals of the Wigner transform of $\widehat{U}%
\psi$ when $\widehat{U}$ describes the group of all symplectic rotations; that
group, denoted by $U(n)$ is isomorphic to the unitary group $U(n,\mathbb{C})$.
(See Corbett's \cite{Corbett} review of Pauli's problem.)

\paragraph{Notation and terminology}

We will denote by $\sigma$ the standard symplectic form on $\mathbb{R}%
^{2n}\equiv\mathbb{R}_{x}^{n}\times\mathbb{R}_{p}^{n}$ that is, in matrix
form\textit{,} $\sigma(z,z^{\prime})=Jz\cdot z^{\prime}$ where $z=(x,p)^{T}$,
$z^{\prime}=(x^{\prime},p^{\prime})^{T}$ and $J=%
\begin{pmatrix}
0 & I\\
-I & 0
\end{pmatrix}
$. The scalar product of two vectors $u,v\in\mathbb{R}^{m}$ is written $u\cdot
v$. We denote by $\operatorname*{Sym}_{++}(m,\mathbb{R})$ the convex set of
all symmetric positive definite real $m\times m$ matrices.

The group of all automorphisms of the symplectic space $(\mathbb{R}%
^{2n},\sigma)$ leaving the standard symplectic form invariant is called the
(standard) symplectic group and is denoted by $\operatorname*{Sp}(n)$. The
additive group of all real symmetric $m\times m$ matrices is denoted by
$\operatorname*{Sym}(m,\mathbb{R})$.

A subspace of $\mathbb{R}^{2n}$ with dimension $n$ on which $\sigma$ vanishes
identically is called a Lagrangian subspace (or plane). The set of all
Lagrangian subspaces of $(\mathbb{R}^{2n},\sigma)$ will be denoted by
$\operatorname*{Lag}(n)$, and is called the Lagrangian Grassmannian of the
symplectic space $(\mathbb{R}^{2n},\sigma).$

\section{Symplectic Geometry and Lagrangian Frames}

For a detailed study of the topics of this section see for instance
\cite{Folland,Birk,Leray}. Also see the monographs \cite{GS1,GS2} by
Guilllemin n Sternberg

\subsection{ The groups $\operatorname*{Sp}(n)$ and $U(n)$}

The symplectic group $\operatorname*{Sp}(n)$ consists of all linear
\ automorphisms $S$ of the symplectic space $(\mathbb{R}^{2n},\sigma$ which
preserve the symplectic form $\sigma$ that is
\[
\sigma(Sz,Sz^{\prime})=\sigma(z,z^{\prime})
\]
for all $z,z^{\prime}\in\mathbb{R}^{2n}$. The group $\operatorname*{Sp}(n)$ is
a connected classical Lie groups and $\operatorname*{Sp}(1)$ consists of all
real matrices with determinant one. Identifying $S\in\operatorname*{Sp}(n)$
wit its matrix in the canonical basis of $\mathbb{R}^{2n}$ we have
$S\in\operatorname*{Sp}(n)$ if and only if $SJS^{T}=S^{T}JS=J$; \ it follows
that $\operatorname*{Sp}(n)$ is a closed subgroup of $GL(2n,\mathbb{R})$ and
hence a classical Lie group. Writing $S=%
\begin{pmatrix}
A & B\\
C & D
\end{pmatrix}
$, where the \textquotedblleft blocks\textquotedblright\ $A,B,C,D$ being
$n\times n$ matrices, we have $S\in\operatorname*{Sp}(n)$ if and only if
\cite{GS2,Birk}
\begin{align}
A^{T}C,\text{ }B^{T}D\text{ \ \textit{are symmetric, and} }A^{T}D-C^{T}B &
=I\label{cond12}\\
AB^{T},\text{ }CD^{T}\text{ \ \textit{are\ symmetric, and} }AD^{T}-BC^{T} &
=I.\label{cond22}%
\end{align}
One shows that the group $\operatorname*{Sp}(n)$ is generated by the standard
symplectic matrix $J$ together with the matrices%
\[
V_{P}=%
\begin{pmatrix}
I_{n\times n} & 0\\
-P & I_{n\times n}%
\end{pmatrix}
\text{ \ },\text{ \ }M_{L}=%
\begin{pmatrix}
L^{-1} & 0\\
0 & L^{T}%
\end{pmatrix}
\]
where $P\in\operatorname*{Sym}(n,\mathbb{R})$ and $L\in GL(n,\mathbb{R})$.

A subgroup of $\operatorname*{Sp}(n)$ of particular interest is the image
$U(n)$ in $\operatorname*{Sp}(n)$ of the unitary group $U(n,C)$ by the
monomorphism
\[
\iota:u=+iB\longmapsto U=%
\begin{pmatrix}
A & B\\
-B & A
\end{pmatrix}
\]
(the relations (\ref{cond12}) (\ref{cond22})) are satisfied since $u^{\ast
}u=uu^{\ast}u=I$). The elements of $U)n)$ are symplectic rotations:%
\begin{equation}
U(n)=\operatorname*{Sp}(n)\cap O(2m,\mathbb{R}). \label{USPO}%
\end{equation}
It follows from the conditions that $U\in U(n)$ if and only if they satisfy
the equivalent conditions
\begin{align}
A^{T}B\text{ \textit{symmetric and }}A^{T}A+B^{T}B  &  =I\label{u1}\\
AB^{T}\text{ \textit{symmetric and }}AA^{T}+BB^{T}  &  =I. \label{u2}%
\end{align}

\subsection{The metaplectic representation of $\operatorname*{Sp}(n)$}

The symplectic group $\operatorname*{Sp}(n)$ is connected and contractible to
its maximal subgroup
\[
U(n)=\operatorname*{Sp}(n)\cap O(2n(
\]
the latter is  isomorphic to the unitary group $U(n,C)$ hence the group
isomorphisms
\[
\pi_{1}(\operatorname*{Sp}(n))\simeq\pi_{1}(U(n,C))\simeq(\mathbb{Z},+).
\]
It follows that $\operatorname*{Sp}(n)$ has covering groups
$\operatorname*{Sp}_{q}(n)$ of all orders $q=2,3,...,+\infty$. It turns out
that the double cover $\operatorname*{Sp}_{2}(n)$ has a unitary representation
in $L^{2}(\mathbb{R}^{n})$ by the metaplectic group $\operatorname*{Mp}(n)$.
The covering mapping
\begin{equation}
\pi_{\operatorname*{Mp}}:\operatorname*{Mp}(n)\longrightarrow
\operatorname*{Sp}(n)\text{ \ , \ }\pi_{\operatorname*{Mp}}(\widehat{S}%
)=S\label{pimp}%
\end{equation}
satisfies $\operatorname*{Ker}(\pi_{\operatorname*{Mp}})=\{-I,I\}$ and is
adjusted so that
\[
\pi_{\operatorname*{Mp}}(\widehat{J})=J\text{ \ if \ }\widehat{J}%
\psi(x)=\left(  \tfrac{1}{2\pi i\hbar}\right)  ^{n/2}\int_{\mathbb{R}^{n}%
}e^{-\frac{i}{\hbar}x\cdot x^{\prime}}\psi(x^{\prime})dx^{\prime}%
\]
and one shows that $\operatorname*{Mp}(n)$ is generated by $\widehat{J}$
together with the unitary automorphisms
\[
\widehat{V}_{P}\psi(x)=e^{-\frac{i}{2}Px\cdot x}\psi(x)\text{ \ ,
\ }\widehat{M}_{L,m}\psi(x)=i^{m}\sqrt{|\det L|}\psi(Lx)
\]
where $P\in\operatorname*{Sym}(n,\mathbb{R})$ and $L\in GL(n,\mathbb{R})$; the
integer $m$ corresponds to a choice of $\arg\det L$. The operators
$\widehat{V}$ and $\widehat{M}$ cover the symplectic automorphisms $V_{P}$ and
$M_{L,m}$ defined above:
\[
\pi_{\operatorname*{Mp}}(\widehat{V}_{P})=V_{P}\text{ \ \textit{and} \ }%
\pi_{\operatorname*{Mp}}(\widehat{M}_{L,m})=M_{L}.
\]

Assume that $S=%
\begin{pmatrix}
A & B\\
C & D
\end{pmatrix}
$ is a free symplectic matrix, \textit{i.e.}. that $\det B\neq0$. Then the
metaplectic operators $\pm\widehat{S}$ covering $S$ are given by
\begin{equation}
\widehat{S}_{\mathcal{A},m}\psi(x)=\left(  \tfrac{1}{2\pi\hbar}\right)
^{n/2}i^{m-n/2}\sqrt{|\det B^{-1}|}\int_{\mathbb{R}^{n}}e^{\frac{i}{\hbar
}\mathcal{A}(x,x^{\prime})}\psi(x^{\prime})dx^{\prime}\label{qft1}%
\end{equation}
where
\begin{equation}
\mathcal{A}(x,x^{\prime})=\frac{1}{2}DB^{-1}x\cdot x-B^{-1}x\cdot x^{\prime
}+\frac{1}{2}B^{-1}Ax^{\prime}\cdot x^{\prime}\label{wfree}%
\end{equation}
and the integer $m$ is $0$ modulo $2$ if $\det B>0$ and modulo $2$ if $\det
B>0$. Notice that $\mathcal{A}$ generates $S$ in the sense that
$(x,p)=SA(x^{\prime},p^{\prime})$ if and only if $p=\partial_{x}%
\mathcal{A(}x,x^{\prime})$ and $p^{\prime}=-\partial_{x^{\prime}}%
\mathcal{A(}x,x^{\prime})$. When $S=U\in U(n)$ the metaplectic operators
$\pm\widehat{U}$ are generalized fractional Fourier transformers; for instance
for $n=1$ and $U=%
\begin{pmatrix}
\cos\varphi & \sin\varphi\\
-\sin\varphi & \cos\varphi
\end{pmatrix}
$ we have, for $\sin\varphi>0$,
\begin{gather*}
\widehat{U}\psi(x)=\left(  \tfrac{1}{2\pi i\hbar}\right)  ^{1/2}\sqrt{\frac
{1}{\sin\varphi}}\int_{\mathbb{-\infty}}^{\infty}e^{\frac{i}{\hbar}%
\mathcal{A}(x,x^{\prime})}\psi(x^{\prime})dx^{\prime}\\
\mathcal{A}(x,x^{\prime})=\frac{1}{2}\cot\varphi(x^{2}+x^{\prime2})-\frac
{1}{\sin\varphi}xx^{\prime}-\frac{1}{2}\cot\varphi(x^{\prime}\cdot x^{\prime
}).
\end{gather*}

\subsection{Lagrangian planes}

A linear subspace of $(\mathbb{R}^{2n},\sigma)$ with dimension $n$ is on which
the symplectic form $\sigma$ vanishes identically is called a Lagrangian
plane. A typical example is provided by the subspace with coordinates
$x_{1},...,x_{k},p_{k+1},...,p_{n}$ with $1\leq k<n$. The set of all
Lagrangian subspaces of $(\mathbb{R}^{2n},\sigma)$ will be denoted by
$\operatorname*{Lag}(n)$, and is called the Lagrangian Grassmannian of the
symplectic space $(\mathbb{R}^{2n},\sigma)$. There is a natural transitive
action
\[
\operatorname*{Sp}(n)\times\operatorname*{Lag}(n)\longrightarrow
\operatorname*{Lag}(n).
\]
We will denote by $\ell_{X}$ and $\ell_{P}$ the position and momentum
Lagrangian planes $\mathbb{R}^{n}\times0$ and $0\times0\mathbb{R}^{n}$,
respectively. The unitary group $U(n)$ (and hence symplectic group
$\operatorname*{Sp}(n)$) acts transitively on $\operatorname*{Lag}(n)$.

The transitivity of this action allows to identify \cite{Birk}
$\operatorname*{Lag}(n)$ with closed subset $W(n)$ of all symmetric elements
of $U(n)$; for this one shows that the \textquotedblleft Souriau
mapping\textquotedblright\ \cite{Souriau,Birk}:%
\[
\ell=\iota(u)\ell_{P}\in\operatorname*{Lag}(n)\longmapsto\iota(u)\iota
(u^{T})\in W(n)
\]
is a bijection $\operatorname*{Lag}(n)\longrightarrow W(n)$. Using this
bijection one equips $\operatorname*{Lag}(n)$ with a natural topology.

Notice that the transitivity of the action%
\begin{equation}
U(n)\times\operatorname*{Lag}(n)\longrightarrow\operatorname*{Lag}(n)
\label{ulag}%
\end{equation}
implies that every $\ell\in\operatorname*{Lag}(n)$ has an equation%
\begin{equation}
Ax+Bp=0\text{ \ },\text{ \ }(x,p)\in\mathbb{\ell} \label{axbp}%
\end{equation}
with $A,B$ as in (\ref{u1}), (\ref{u2}) above.

We will call \textit{Lagrangian frame} any pair $(\ell,\ell^{\prime})$ of
transverse Lagrangian subspaces, that is $(\ell,\ell^{\prime})\in
\operatorname*{Lag}(n)\times\operatorname*{Lag}(n)$ and $\ell\cap\ell^{\prime
}=0$ (equivalently $\ell\oplus\ell^{\prime}=\mathbb{R}^{2n}$). The pair
$(\ell_{X},\ell_{P})$ will be called the \textit{canonical frame}. \ We will
denote by $\mathcal{F}_{\operatorname*{Lag}}(n)$ the set of all Lagrangian
frames of $(\mathbb{R}^{2n},\sigma)$. An essential result is:

\begin{lemma}
\label{LemmaTrans}The symplectic group action
\begin{gather*}
\operatorname*{Sp}(n)\times\mathcal{F}_{\operatorname*{Lag}}(n)\longrightarrow
\mathcal{F}_{\operatorname*{Lag}}(n)\\
S((\ell,\ell^{\prime})\longmapsto(S\ell,S\ell^{\prime})
\end{gather*}
is transitive. In particular, every Lagrangian frame $(\ell,\ell^{\prime}%
)\in\mathcal{F}_{\operatorname*{Lag}}(n)$ can be obtained from the canonical
frame $(\ell_{X},\ell_{P})$ by a symplectic automorphism.
\end{lemma}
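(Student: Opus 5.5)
It suffices to show that every Lagrangian frame $(\ell,\ell')$ is of the form $(S\ell_X,S\ell_P)$ for some $S\in\operatorname*{Sp}(n)$. Given two frames, we then get $S_1,S_2$ with $(\ell_1,\ell_1')=S_1(\ell_X,\ell_P)$ and $(\ell_2,\ell_2')=S_2(\ell_X,\ell_P)$, and $S_2S_1^{-1}$ maps the first frame to the second. That $S(\ell,\ell')$ is again a frame is immediate, since $S$ preserves $\sigma$, dimensions and transversality. So the real content is to construct, for a given frame $(\ell,\ell')$, a symplectic basis adapted to it.

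The plan is to choose an arbitrary basis $e_1,\dots,e_n$ of $\ell$. Then we use the pairing $\ell'\times\ell\to\mathbb{R}$, $(f,e)\mapsto\sigma(e,f)$, and show it is nondegenerate. Suppose $f\in\ell'$ satisfies $\sigma(e,f)=0$ for all $e\in\ell$. Since $\ell'$ is Lagrangian, $\sigma(f',f)=0$ for all $f'\in\ell'$ as well. Because $\ell\oplus\ell'=\mathbb{R}^{2n}$, $f$ is then $\sigma$-orthogonal to all of $\mathbb{R}^{2n}$, and nondegeneracy of $\sigma$ gives $f=0$. Both spaces have dimension $n$, so the pairing identifies $\ell'$ with the dual of $\ell$. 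Hence there is a unique dual basis $f_1,\dots,f_n$ of $\ell'$ with $\sigma(e_i,f_j)=\delta_{ij}$. Together with $\sigma(e_i,e_j)=0$ and $\sigma(f_i,f_j)=0$, which hold because both planes are Lagrangian, $(e_1,\dots,e_n,f_1,\dots,f_n)$ is a symplectic basis of $\mathbb{R}^{2n}$.

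Next, let $S$ be the linear map sending the canonical basis vectors $(\epsilon_i,0)$ of $\ell_X$ to $e_i$ and $(0,\epsilon_j)$ of $\ell_P$ to $f_j$. With $\sigma(z,z')=Jz\cdot z'$, the canonical basis satisfies $\sigma((\epsilon_i,0),(0,\epsilon_j))=\delta_{ij}$, and all other pairings vanish. So $S$ maps a symplectic basis to a symplectic basis and therefore preserves $\sigma$ on basis vectors. By bilinearity it preserves $\sigma$ everywhere, so $S\in\operatorname*{Sp}(n)$, and $S\ell_X=\ell$, $S\ell_P=\ell'$. One should check that the sign convention of $J$ matches the order of the pairing; if it does not, replace $f_j$ by $-f_j$.

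The only nontrivial point is the nondegeneracy of the pairing between $\ell$ and $\ell'$, which is exactly where transversality is used. The rest is bookkeeping. As a remark, the frame version cannot be obtained from the transitivity of $U(n)$ on $\operatorname*{Lag}(n)$ stated earlier, since $U(n)$ preserves orthogonality and so cannot reach non-orthogonal frames; the full group $\operatorname*{Sp}(n)$ is genuinely needed.
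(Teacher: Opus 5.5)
Your proof is correct and follows the paper's route: take symplectic bases adapted to the frames and define $S$ by sending one adapted basis to the other (factoring through the canonical frame as $S_2S_1^{-1}$ gives the same map). It improves on the paper by proving, via the nondegenerate pairing between $\ell$ and $\ell'$, that an adapted symplectic basis exists; this is exactly where transversality enters, and the paper only asserts it.

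One small correction: with the paper's convention $\sigma(z,z')=Jz\cdot z'$ one has $\sigma((\epsilon_i,0),(0,\epsilon_j))=-\delta_{ij}$ (equivalently $\sigma(f_i,e_j)=\delta_{ij}$, as in the paper), so your fallback of replacing $f_j$ by $-f_j$ is the case that applies.
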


\begin{proof}
Let $(\ell_{1},\ell_{1}^{\prime})$ and $(\ell_{2},\ell_{2}^{\prime})$ be two
Lagrangian frames. Choose a basis $(e_{1i})_{1\leq1\leq n}$ of $\ell_{1}$ and
a basis $(f_{1j})_{1\leq j\leq n}$ of $\ell_{1}^{\prime}$ whose union
$(e_{1i})_{1\leq1\leq n}\cup(f_{1j})_{1\leq j\leq n}$ is a symplectic basis,
that is $\sigma(e_{1i},e_{1j})=\sigma(f_{1i},f_{1j})=0$ and $\sigma
(f_{1i},e_{1j})=\delta_{ij}$ for $1\leq i,j\leq n$. Similarly, choose bases
$(e_{2i})_{1\leq1\leq n}$ and $(f_{2j})_{1\leq j\leq n}$ of $\ell_{2}$ and
$\ell_{2}^{\prime}$ whose union is also a symplectic basis. The linear
automorphism of $\mathbb{R}^{2n}$ defined by $S(e_{1i})=e_{2i}$ and
$S(f_{1i})=f_{2i}$ for $1\leq i\leq n$ is in $\operatorname*{Sp}(n)$ and we
have $(\ell_{2},\ell_{2}^{\prime})=(S\ell_{1},S\ell_{1}^{\prime})$.
\end{proof}

\section{The Lagrangian Radon Transform}

\subsection{Lagrangian Radon transform: first definitions}

For $\ell\in\operatorname*{Lag}(n)$ and $z=(x,p)\in\mathbb{R}^{2n}$ we set
\[
\ell(z)=\ell+z.
\]
Assuming that $\ell$ is given by the equation
\[
\ell:Ax+Bp=0
\]
with $A^{T}B$ symmetric and $A^{T}A+B^{T}B=I$ we parametrize $\ell(z)$ by
\begin{equation}
x(u)=-B^{T}u+x\text{ \ },\text{ \ }p(u)=A^{T}u+p \label{parameter1}%
\end{equation}
and hence
\[
\ell=U^{-1}\ell_{P}=U^{T}\ell_{P}\text{ \ },\text{ \ }U=%
\begin{pmatrix}
A & B\\
-B & A
\end{pmatrix}
\in U(n).
\]

\begin{definition}
The Lagrangian Radon transform (for short:\ LRT) $R_{\ell}W\psi$ is defined by
integral
\begin{equation}
R_{\ell}W\psi(z)=\int_{\mathbb{R}^{n}}W\psi(-B^{T}u+x,A^{T}u+p)du.
\label{integral}%
\end{equation}
We will write%
\[
R_{\ell}W\psi(z)=\int_{\ell(z)}W\psi(\lambda)d\lambda.
\]

\end{definition}

Applying this definition to the case where $\psi\in L^{1}(\mathbb{R}^{n})\cap
L^{2}(\mathbb{R}^{n})$ and choosing $\ell=\ell_{P}$ we have $B=0$ and
$A=I_{n\times n}$ so that, in view of the marginal properties (\ref{wigmarg}),%
\[
R_{\ell_{P}}W\psi(z)=\int_{\mathbb{R}^{n}}W\psi(x,u+p)du=|\psi(x)|^{2}.
\]
Similarly, choosing $\ell=\ell_{P}$, and $A=0$, $B=I_{n\times n}$,%
\begin{equation}
R_{\ell_{X}}W\psi(z)=\int_{\mathbb{R}^{n}}W\psi(-u+x,p)du=|\widehat{\psi
}(p)|^{2}. \label{rlx}%
\end{equation}

Let us prove that :

\begin{proposition}
For fixed $z=(x,p)$ the integral (\ref{integral}) is independent of the choice
of parametrization (\ref{parameter1}).
\end{proposition}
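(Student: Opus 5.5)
The plan is to identify exactly which other parametrizations of the same type can describe $\ell$, and then check that the integral (\ref{integral}) is invariant under that change of data. Suppose $\ell$ is given by two equations $Ax+Bp=0$ and $A'x+B'p=0$, each satisfying (\ref{u1}). Then $U=\begin{pmatrix}A & B\\ -B & A\end{pmatrix}$ and $U'=\begin{pmatrix}A' & B'\\ -B' & A'\end{pmatrix}$ both lie in $U(n)$.

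\textbf{Step 1: relate the two unitaries.} Since $U^{T}\ell_{P}=\ell=U'^{T}\ell_{P}$, the element $U'U^{T}\in U(n)$ stabilizes $\ell_{P}$. I will show that the stabilizer of $\ell_{P}$ in $U(n)$ consists exactly of the matrices $\begin{pmatrix}R & 0\\ 0 & R\end{pmatrix}$ with $R\in O(n,\mathbb{R})$. Indeed, an element $\begin{pmatrix}C & D\\ -D & C\end{pmatrix}$ sends $(0,p)$ to $(Dp,Cp)$, so preserving $\ell_{P}$ forces $D=0$, and then (\ref{u1}) gives $C^{T}C=I$. Hence $U'=\operatorname{diag}(R,R)\,U$, which means $A'=RA$ and $B'=RB$ with $R$ orthogonal. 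Equivalently, row-space reasoning works: both $(A\;B)$ and $(A'\;B')$ have row space $\ell^{\perp}$ with orthonormal rows, since $AA^{T}+BB^{T}=I$ by (\ref{u2}). So they differ by an orthogonal left factor.

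\textbf{Step 2: change variables.} In the second parametrization the integrand is
\[
W\psi(-B^{T}R^{T}u+x,\,A^{T}R^{T}u+p).
\]
Substituting $v=R^{T}u$, which has Jacobian $|\det R|=1$, turns this into the integral for the first parametrization. Hence the two integrals coincide.

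\textbf{Step 3: confirm the point set is correct.} The map $u\mapsto(-B^{T}u,A^{T}u)$ is injective, since $AA^{T}+BB^{T}=I$, and its image is annihilated by $(A\;B)$ because $AB^{T}$ is symmetric. So it parametrizes $\ell$ isometrically. This also shows the value depends only on $\ell$ and on the affine plane $\ell(z)$.

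\textbf{Step 4: independence of the base point.} If one also allows replacing $z$ by $z+\lambda$ with $\lambda\in\ell$, write $\lambda=(-B^{T}w,A^{T}w)$. A translation $u\mapsto u-w$ then shows the integral is unchanged. So $R_{\ell}W\psi(z)$ depends only on $\ell(z)$.

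\textbf{Main obstacle.} The only delicate point is Step 1, namely proving that two normalized equations for $\ell$ differ by an orthogonal factor. This rests on the normalization (\ref{u2}). The row-space argument is the cleanest route: both matrices $(A\;B)$ and $(A'\;B')$ have orthonormal rows spanning $\ell^{\perp}$, so $A'=RA$ and $B'=RB$ with $R=(A'\;B')(A\;B)^{T}$ orthogonal.
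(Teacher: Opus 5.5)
Your proof is correct and takes essentially the same route as the paper. Both arguments use the fact that the stabilizer in $U(n)$ of a coordinate Lagrangian plane consists of the matrices $\begin{pmatrix} H & 0\\ 0 & H\end{pmatrix}$ with $H\in O(n,\mathbb{R})$, so two admissible parametrizations differ by $u\mapsto Hu$, which has unit Jacobian; the paper works with $\ell_X$ and the rotation $V^{T}$, while you work with $\ell_P$ and $U'U^{T}$, and your row-space argument and Steps 3--4 are correct extras.
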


\begin{proof}
We begin by noting that the Lagrangian subspace $\ell:Ax+Bp=0$ is the image of
$\ell_{X}=\{(u,0):u\in\mathbb{R}^{n}\}$ by the symplectic rotation%
\[
V^{T}=%
\begin{pmatrix}
-B^{T} & -A^{T}\\
A^{T} & -B^{T}%
\end{pmatrix}
\in U(n).
\]
Let a new parametrization of $\ell(z)$ be
\begin{equation}
x^{\prime}(u)=-B^{\prime T}u+x\ ,\ \ p^{\prime}(u)=A^{\prime T}u+p \label{re}%
\end{equation}
the matrices $A^{\prime}$ and $B^{\prime}$ satisfying relations similar to
those of $A$ and $B$, and let $V^{\prime}$ be the corresponding symplectic
rotation. The product $V^{-1}V^{\prime}$ leaves $\ell_{X}$ invariant; since it
is a symplectic rotation we have $^{-1}V^{\prime}=%
\begin{pmatrix}
H & 0\\
0 & H
\end{pmatrix}
$ with $H\in O(n,\mathbb{R})$ hence the reparametrization (\ref{re}) is given
by%
\[
x^{\prime}(u)=-B^{T}Hu+x\ ,\ p^{\prime}(u)=A^{T}Hu+p
\]
leading to the same value of the integral \ref{integral}) since $d(Hu)=du$.
\end{proof}

The following result relates the integral (\ref{integral}) to the notion of
marginal value of the Wigner transform:

\begin{proposition}
\label{Proprawig}Let $\ell\in\operatorname*{Lag}(n):Ax+Bp$, that is
$\ell=U^{-1}\ell_{P}$ were%
\[
U=%
\begin{pmatrix}
A & B\\
-B & A
\end{pmatrix}
\in\operatorname*{Sp}(n)\cap O(2n,\mathbb{R)}.
\]
For $\psi\in L^{1}(\mathbb{R}^{n})\cap L^{2}(\mathbb{R}^{n})$, we have
\begin{equation}
R_{\ell}W\psi(x,p)=|\widehat{U}\psi(Ax+Bp)|^{2} \label{integralbis}%
\end{equation}
where $\widehat{U}\in\operatorname*{Mp}(n)$ covers the symplectic rotation $U$.
\end{proposition}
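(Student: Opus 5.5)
The plan is to reduce the integral (\ref{integral}) to a standard marginal of a transformed Wigner function, by a linear change of variables in phase space followed by symplectic covariance. The key observation is that the parametrization (\ref{parameter1}) of $\ell(z)$ is an image of a coordinate plane under $U^{T}=U^{-1}$. Indeed,
\[
U^{T}=\begin{pmatrix} A^{T} & -B^{T}\\ B^{T} & A^{T}\end{pmatrix},
\]
so
\[
(-B^{T}u,\,A^{T}u)=U^{T}(0,u).
\]
Hence the points of $\ell(z)$ are $z+U^{-1}(0,u)$ with $u\in\mathbb{R}^{n}$. Since $U$ is linear, we can write $z=U^{-1}(Uz)$. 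Setting $Uz=(x',p')$, the explicit form of $U$ gives $x'=Ax+Bp$ and $p'=-Bx+Ap$. The integrand therefore becomes
\[
W\psi\big(U^{-1}(x',\,p'+u)\big).
\]

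Next invoke the symplectic covariance of the Wigner transform, $W\psi\circ U^{-1}=W(\widehat{U}\psi)$, where $\widehat{U}\in\operatorname*{Mp}(n)$ covers $U$. This was recalled in Section 2, and either sign choice $\pm\widehat{U}$ gives the same Wigner function. It yields
\[
R_{\ell}W\psi(z)=\int_{\mathbb{R}^{n}}W(\widehat{U}\psi)(x',p'+u)\,du .
\]
The substitution $u\mapsto u-p'$ turns this into the full momentum integral $\int W(\widehat{U}\psi)(x',u)\,du$. The first marginal identity (\ref{wigmarg}), applied to $\widehat{U}\psi$, then gives $|\widehat{U}\psi(x')|^{2}=|\widehat{U}\psi(Ax+Bp)|^{2}$, which is (\ref{integralbis}).

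The main obstacle is justifying the marginal formula for $\widehat{U}\psi$. The hypothesis is on $\psi$, but (\ref{wigmarg}) requires $\widehat{U}\psi$ and its Fourier transform to be in $L^{1}\cap L^{2}$, and $\widehat{U}$ is a fractional-Fourier-type operator that need not preserve $L^{1}$. I would first prove the identity for $\psi$ in the Schwartz space $\mathcal{S}(\mathbb{R}^{n})$, which $\operatorname*{Mp}(n)$ preserves, so every step above is an absolutely convergent integral. I would then extend by density. The right-hand side is continuous in $\psi$ in $L^{2}$ (in $L^{1}$ norm, as a function of $z$ restricted to a transversal), by unitarity of $\widehat{U}$. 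The left-hand side is then interpreted as the corresponding limit, which is the same $L^{2}$-based interpretation already implicit in (\ref{wigmarg}) for $L^{1}\cap L^{2}$ functions.

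A minor point to check is consistency with the computed examples. For $\ell=\ell_{P}$ we have $U=I$, recovering $|\psi(x)|^{2}$. For $\ell=\ell_{X}$ we have $A=0$, $B=I$ and $U=J$, giving $|\widehat{J}\psi(p)|^{2}=|\widehat{\psi}(p)|^{2}$, in agreement with (\ref{rlx}).
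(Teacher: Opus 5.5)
Your proposal is correct and is essentially the paper's proof. Both write the points of $\ell(z)$ as $U^{-1}((0,u)+Uz)$, apply the covariance $W\psi\circ U^{-1}=W(\widehat{U}\psi)$, absorb the momentum component of $Uz$ by translating $u$, and conclude with the first marginal identity. You also get that component's sign right ($-Bx+Ap$, where the paper writes $Bx-Ap$), though this is harmless because it is integrated out.

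Your regularity step is a genuine improvement that the paper omits, since the marginal formula is applied to $\widehat{U}\psi$ rather than to $\psi$. Note, however, that for general $\psi\in L^{1}\cap L^{2}$ the density argument establishes the identity only with the left-hand side understood as a limit, not as a pointwise Lebesgue integral.
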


\begin{proof}
Using the symplectic covariance relation
\begin{equation}
W\psi(U^{^{-1}}z)=W(\widehat{U}\psi)(z) \label{fa}%
\end{equation}
valid for any metaplectic operator \cite{Birk,WIGNER}, we have
\begin{align*}
R_{\ell}W\psi(z)  &  =\int_{\mathbb{R}^{n}}W\psi(U^{-1}\left[
(0,u)+U(x,p))\right]  du\\
&  =\int_{\mathbb{R}^{n}}W(\widehat{U}\psi)(0,u)+U(x,p))du\\
&  =\int_{\mathbb{R}^{n}}W(\widehat{U}\psi)(Ax+Bp,Bx-Ap+u)du\\
&  =\int_{\mathbb{R}^{n}}W(\widehat{U}\psi)(Ax+Bp,u)du
\end{align*}
hence (\ref{integralbis}) in view of the first marginal property
(\ref{wigmarg}).
\end{proof}

The Lagrangian Radon transform can be extended by replacing the symplectic
rotation $U$ with an arbitrary invertible matrix $%
\begin{pmatrix}
A & B\\
-B & A
\end{pmatrix}
$ with $A^{T}B=B^{T}A$. This extension will allow us to obtain a simple
inversion formula. Defining
\[
U_{\Lambda}=%
\begin{pmatrix}
A & B\\
-B & A
\end{pmatrix}
\text{ \ },\text{ \ }\Lambda=(A^{T}A+B^{T}B)^{1/2}%
\]
and assuming that $\det\Lambda\neq0$, we have the factorization
\[
U_{\Lambda}=%
\begin{pmatrix}
\Lambda & 0\\
0 & \Lambda
\end{pmatrix}%
\begin{pmatrix}
\Lambda^{-1}A & \Lambda^{-1}B\\
-\Lambda^{-1}B & \Lambda^{-1}A
\end{pmatrix}
\]
where%
\begin{equation}
U=%
\begin{pmatrix}
\Lambda^{-1}A & \Lambda^{-1}B\\
-\Lambda^{-1}B & \Lambda^{-1}A
\end{pmatrix}
\in\operatorname*{Sp}(n)\cap O(2n,\mathbb{R})
\end{equation}
has the inverse
\begin{equation}
U^{-1}=%
\begin{pmatrix}
A^{T}\Lambda^{-1} & -B^{T}\Lambda^{-1}\\
B^{T}\Lambda^{-1} & A^{T}\Lambda^{-1}%
\end{pmatrix}
.
\end{equation}

\begin{definition}
The matrices $U_{\Lambda}$ and $U$ being defined as above, the extended
Lagrangian Radon transform of $W\psi$ is defined by%
\begin{equation}
R_{\ell,\Lambda}(W\psi)(x,p)=\det\Lambda^{-1}|\widehat{U}\psi(\Lambda
^{-1}(Ax+Bp)|^{2} \label{xtend}%
\end{equation}
where $\widehat{U}\in\operatorname*{Mp}(n)$ covers the symplectic rotation $U$.
\end{definition}

When $U_{\Lambda}=U$ then $\Lambda=I$ and (\ref{xtend}) reduces to formula
(\ref{integralbis}). \ /

\subsection{The Radon inversion formula}

The main result of this section is that the Wigner transform can be
reconstructed from its Radon transform:

\begin{proposition}
\label{Propinverse}Viewing the matrices $A$ and $B$ as vectors of
$\mathbb{R}^{n^{2}}$ we have
\begin{equation}
W\psi(x,p)=\left(  \frac{1}{2\pi\hbar}\right)  ^{2n^{2}}\int_{\mathbb{R}%
^{n(2n+1)}}R_{\ell,\Lambda}(W\psi)(y,p)e^{\frac{i}{\hbar}(y-Ax-Bp)}dydAdB
\label{inverse}%
\end{equation}
where $dA=%
%TCIMACRO{\tprod _{i,j=1}^{n}}%
%BeginExpansion
{\textstyle\prod_{i,j=1}^{n}}
%EndExpansion
da_{ij}$ and $dB=%
%TCIMACRO{\tprod _{i,j}^{n}}%
%BeginExpansion
{\textstyle\prod_{i,j}^{n}}
%EndExpansion
db_{ij}$ if $A=(a_{ij})_{1\leq i,j\leq n}$, $B=(b_{ij})_{1\leq i,j\leq n}$.
\end{proposition}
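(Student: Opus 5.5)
The plan is to read (\ref{inverse}) as a Fourier inversion formula in the ``Radon'' variables. First, rewrite the extended transform (\ref{xtend}) as an integral of $W\psi$ over an affine Lagrangian plane, and compute its Fourier transform in $y$. This should be the Wigner function's symplectic Fourier transform evaluated along a ray. Then integrate over all admissible $(A,B)$, viewed as a parametrization of (almost all of) a frequency space, and apply ordinary Fourier inversion on $\mathbb{R}^{2n}$.

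\textbf{Step 1: $R_{\ell,\Lambda}$ as a marginal.} By Proposition \ref{Proprawig} applied to the rotation $U$, we have $|\widehat{U}\psi(w)|^2=\int_{\mathbb{R}^n}W(\widehat{U}\psi)(w,u)\,du$. Set $w=\Lambda^{-1}(Ax+Bp)$ and use the covariance relation (\ref{fa}), $W(\widehat{U}\psi)=W\psi\circ U^{-1}$. Then change variables by $\Lambda$ in both slots; the factor $\det\Lambda^{-1}$ in (\ref{xtend}) is exactly the Jacobian this requires. The result should be
\[
R_{\ell,\Lambda}(W\psi)(y,\cdot)=\int_{\mathbb{R}^{2n}}W\psi(z')\,\delta\big(y-(Ax'+Bp')\big)\,dz'
\]
(understood as a distribution, or by testing against functions of $y$). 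This is the generalized version of (\ref{Radon1}), with vector-valued ``lines'' $Ax+Bp=y$.

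\textbf{Step 2: Fourier transform in $y$.} Integrating against $e^{\frac{i}{\hbar}\eta\cdot y}$ (I will interpret the phase $y-Ax-Bp$ in (\ref{inverse}) as paired with a dual vector; absorbing it into $A,B$ amounts to taking, e.g., $\eta=(1,\dots,1)$ or the contracted form $\operatorname{tr}$) gives
\[
\int R_{\ell,\Lambda}(W\psi)(y)\,e^{\frac{i}{\hbar}\eta\cdot y}\,dy=\int W\psi(z')\,e^{\frac{i}{\hbar}(A^T\eta\cdot x'+B^T\eta\cdot p')}\,dz'=(2\pi\hbar)^n\,\mathcal{F}W\psi(A^T\eta,B^T\eta).
\]
So the $y$-integral produces the Fourier transform of $W\psi$ at the frequency $\zeta=(A^T\eta,B^T\eta)$. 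The extra phase $e^{-\frac{i}{\hbar}(Ax+Bp)\cdot\eta}$ in (\ref{inverse}) is precisely $e^{-\frac{i}{\hbar}\zeta\cdot z}$.

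\textbf{Step 3: integration over $(A,B)$.} Change variables from $(A,B)\in\mathbb{R}^{2n^2}$ (restricted to the set where $A^TB$ is symmetric and $\Lambda$ is invertible) to the frequency $\zeta\in\mathbb{R}^{2n}$, with the remaining directions as fibre variables. Fourier inversion $W\psi(z)=(2\pi\hbar)^{-2n}\int\mathcal{F}W\psi(\zeta)e^{-\frac{i}{\hbar}\zeta\cdot z}d\zeta$ then yields (\ref{inverse}), provided the fibre integrals and Jacobians assemble into the stated constant $(2\pi\hbar)^{-2n^2}$.

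\textbf{Main obstacle.} Step 3 is where I expect difficulty. The map $(A,B)\mapsto\zeta$ has large fibres, so the measure $dA\,dB$ must be pushed forward correctly, and the symmetry constraint on $A^TB$ means $dA\,dB$ is really a measure on a submanifold. My first attempt will be to use the transitivity of $U(n)$ on $\operatorname{Lag}(n)$ (via the factorization $U_\Lambda=\operatorname{diag}(\Lambda,\Lambda)U$) to split $(A,B)$ into a radial part $\Lambda$ and a rotation $U$. I will then show that the pushforward of Lebesgue measure to $\zeta$ is, up to the normalization in the statement, Lebesgue measure on $\mathbb{R}^{2n}$. If the constant fails to match, I will interpret (\ref{inverse}) in the distributional sense, in which the excess integrations produce delta factors $(2\pi\hbar)^{\dots}\delta(\cdot)$ that fix the constant.
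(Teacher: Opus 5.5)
Your Steps 1 and 2 are correct, and they are the paper's computation in different packaging. The paper writes $|\widehat{U}\psi(\Lambda^{-1}y)|^2$ as a marginal of $W(\widehat{U}\psi)$, rescales $y$, applies covariance and the unimodular substitution $(Y,Z)=U^{-1}(y,q)$. It arrives at $\int W\psi(Y,Z)e^{\frac{i}{\hbar}(A(Y-x)+B(Z-p))}dYdZ$, which under your $\eta$-pairing is $\int W\psi(z')e^{\frac{i}{\hbar}\zeta\cdot(z'-z)}dz'$ with $\zeta=(A^T\eta,B^T\eta)$.

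The genuine gap is Step 3, and it is not a matter of bookkeeping. After Step 2 the integrand depends on $(A,B)$ \emph{only} through $\zeta\in\mathbb{R}^{2n}$, so it is constant on the fibres of $(A,B)\mapsto\zeta$. For $n\geq 2$ these fibres are positive-dimensional and unbounded: dimension $2n^2-2n$ in $\mathbb{R}^{2n^2}$, and still positive-dimensional inside the set where $A^TB$ is symmetric (three-dimensional for $n=2$, $\eta=e_1$). Hence the push-forward of $dA\,dB$ gives every fibre infinite mass; it is not $c\,d\zeta$ for any finite $c$, and the $(A,B)$-integral diverges. Your fallback does not help. Integrating over directions in which the integrand is constant produces infinite volumes (formally $\delta(0)$), not delta functions in live variables that could fix a constant. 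Note also that for $n\geq 2$ the constraint set has codimension $n(n-1)/2$. It is therefore Lebesgue-null in $\mathbb{R}^{2n^2}$, and $dA\,dB$ as written does not even see the region where $R_{\ell,\Lambda}$ is defined.

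The paper closes its proof with the formal identity $\int_{\mathbb{R}^{2n^2}}e^{\frac{i}{\hbar}(A(Y-x)+B(Z-p))}dAdB=(2\pi\hbar)^{2n^2}\delta(Y-x,Z-p)$, which hides the same defect. With your $\eta$-pairing the left side is $\prod_{i,j}(2\pi\hbar)^2\delta(\eta_i(Y-x)_j)\,\delta(\eta_i(Z-p)_j)$. That is $2n^2$ delta factors in $2n$ variables, which for $n\geq 2$ is neither $\delta(Y-x,Z-p)$ nor a distribution. The mismatch between $(2\pi\hbar)^{-2n^2}$ and the $(2\pi\hbar)^{-2n}$ of Fourier inversion on $\mathbb{R}^{2n}$ is the visible symptom. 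So the formula as written, and both arguments, make sense only for $n=1$. There $(a,b)=\zeta$ sweeps $\mathbb{R}^2$, there is no constraint, and your Step 3 is plane Fourier inversion with constant $(2\pi\hbar)^{-2}$.

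For general $n$ your radial/rotation idea is the right one, but you must cut the parameters down to $2n$ dimensions. Take $\Lambda=rI$ and $\eta=e_1$, so that $\zeta=r\,U^T(e_1,0)$. The fibres of $(r,U)\mapsto\zeta$ are then compact cosets of $U(n-1)$. Since $U(n)$ acts transitively on $S^{2n-1}$, polar coordinates with normalized Haar measure $dU$ turn your Steps 1--2 into
\[
W\psi(x,p)=\frac{|S^{2n-1}|}{(2\pi\hbar)^{2n}}\int_0^{\infty}\int_{U(n)}\Bigl(\int_{\mathbb{R}^n}|\widehat{U}\psi(y)|^2e^{\frac{i}{\hbar}ry_1}dy\Bigr)e^{-\frac{i}{\hbar}r(Ax+Bp)_1}\,r^{2n-1}\,dU\,dr .
\]
Here $A,B$ are the blocks of $U$ and $|S^{2n-1}|=2\pi^n/(n-1)!$. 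This is a correct Lagrangian-tomographic inversion formula, and it reduces to (\ref{inverse}) when $n=1$.
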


\begin{proof}
Let us denote by $W$ the integral in the right-hand side of (\ref{inverse}).
We have, using definition of the Lagrangian Radon transform) and thereafter
marginal condition (\ref{wigmarg}),
\begin{align*}
W &  =\det\Lambda^{-1}\int|\widehat{U}\psi(\Lambda^{-1}y)|^{2}e^{\frac
{i}{\hbar}(y-Ax-Bp)}dydAdB\\
&  =\det\Lambda^{-1}\int W(\widehat{U}\psi)(\Lambda^{-1}y,q)e^{\frac{i}{\hbar
}(Xy-Ax-Bp)}dqdydAdB\\
&  =\int W\psi(U^{-1}(y,q))e^{\frac{i}{\hbar}(\Lambda y-Ax-Bp)}dqdydAdB\\
&  =\int W(U^{-1}(y,q))e^{\frac{i}{\hbar}(\Lambda y-Ax-Bp)}dqdydAdB,
\end{align*}
that is, by inverting $U$:%
\begin{multline*}
W=\int W\psi(A^{T}\Lambda^{-1}y-B^{T}\Lambda^{-1}q,B^{T}\Lambda^{-1}%
y+A^{T}\Lambda^{-1}q)\\
\times e^{i(\Lambda y-Ax-Bp)}dqdydAdB.
\end{multline*}
Setting $Y=A^{T}\Lambda^{-1}y-B^{T}\Lambda^{-1}q$ and $Z=B^{T}\Lambda
^{-1}y+A^{T}\Lambda^{-1}q)$ we have $dYdZ=dqdy$ and hence
\[
W=\int W\psi(Y,Z)e^{i(A(Y-x)+B(Z-p))}dYdZdAdB.
\]
Integration of the exponential with respect to the variables $A$ and $B$
yields%
\[
\int_{\mathbb{R}^{2n^{2}}}e^{i(A(Y-x)+B(Z-p))}dAdB=(2\pi\hbar)^{2n^{2}}%
\delta(Y-x,Z-p)
\]
and hence
\begin{align*}
W &  =(2\pi\hbar)^{2n^{2}}\int W\psi(Y,Z)\delta(Y-x,Z-p)|dYdZ\\
&  =(2\pi\hbar)^{2n^{2}}W\psi(x,p)
\end{align*}
from which the inversion formula (\ref{inverse}) immediately follows.
\end{proof}

\section{ Reconstruction of Gaussian Functions}

\subsection{The functions $\psi_{X,Y}$ and their Wigner transforms}

The most general functions whose Wigner transforms are positive are the
Gaussians $\psi_{X,Y}$ defined by%
\begin{equation}
\psi_{X,Y}(x)=\left(  \tfrac{1}{\pi\hbar}\right)  ^{n/4}(\det X)^{1/4}%
e^{-\tfrac{1}{2\hbar}(X+iY)x\cdot x} \label{psixy}%
\end{equation}
and their translations $x\longmapsto\psi_{X,Y}(x-x_{0})$, where $X,Y\in
\operatorname*{Sym}(n,\mathbb{R})$ are $X>0$. This function is normalized to
unity: $||\psi_{X,Y}^{\gamma}||_{L^{2}}=1$ and its Wigner transform is given
by \cite{Birk,WIGNER}
\begin{equation}
W\psi_{X,Y}(z)=\left(  \tfrac{1}{\pi\hbar}\right)  ^{n}e^{-\tfrac{1}{\hbar
}Gz\cdot z} \label{wxy}%
\end{equation}
where $G$ is the positive definite and symplectic matrix
\begin{equation}
G=%
\begin{pmatrix}
X+YX^{-1}Y & YX^{-1}\\
X^{-1}Y & X^{-1}%
\end{pmatrix}
.
\end{equation}
Observe that $G$ factorizes as $G=S^{T}S$ \ where \
\begin{equation}
S=%
\begin{pmatrix}
X^{1/2} & 0\\
X^{-1/2}Y & X^{-1/2}%
\end{pmatrix}
\in\operatorname*{Sp}(n). \label{sts}%
\end{equation}
It follows that%
\begin{equation}
W\psi_{X,Y}(S^{-1}z)=\left(  \tfrac{1}{\pi\hbar}\right)  ^{n}e^{-\tfrac
{1}{\hbar}z\cdot z}=W\phi_{0}(z) \label{ouips}%
\end{equation}
where $\phi$ is the standard Gaussian%
\[
\phi_{0}(x)=\psi_{I,0}(x)=(\pi\hbar)^{-n/4}e^{-|x|^{2}/2\hbar}.
\]
Let $\widehat{S}\in\operatorname*{Mp}(n)$ be any of the two metaplectic
operators covering $S;$ recalling the symplectic covariance property
\begin{equation}
W(\widehat{S}\psi=W\psi(S^{-1}z)\text{, \ }\psi\in L^{2}(\mathbb{R}^{n^{2}})
\label{wigcov}%
\end{equation}
of the Wigner function, we have:

\begin{lemma}
Every Gaussian $\psi_{X,Y}$ is the image (up to a constant unimodular factor)
the image of the standard Gaussian $\phi$ by a metaplectic operator. In fact
\begin{equation}
\psi_{X,Y}=i^{\gamma}\widehat{V}_{Y}\widehat{M}_{X^{1/2},0}\phi_{0}.
\label{psifi}%
\end{equation}

\end{lemma}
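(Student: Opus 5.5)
The plan is to compute the right-hand side of (\ref{psifi}) directly from the explicit formulas for the generators $\widehat{V}_{P}$ and $\widehat{M}_{L,m}$, and to compare with (\ref{psixy}). Set $L=X^{1/2}$, which is symmetric, positive definite and invertible since $X>0$, and take $m=0$ since $\det X^{1/2}>0$. Then
\[
\widehat{M}_{X^{1/2},0}\phi_{0}(x)=\sqrt{\det X^{1/2}}\,(\pi\hbar)^{-n/4}e^{-\frac{1}{2\hbar}|X^{1/2}x|^{2}}=\left(\tfrac{1}{\pi\hbar}\right)^{n/4}(\det X)^{1/4}e^{-\frac{1}{2\hbar}Xx\cdot x},
\]
because $|X^{1/2}x|^{2}=X^{1/2}x\cdot X^{1/2}x=Xx\cdot x$. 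This is exactly $\psi_{X,0}$, and the normalization factor already matches.

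Next I apply $\widehat{V}_{Y}$, which multiplies by a quadratic chirp $e^{-\frac{i}{2\hbar}Yx\cdot x}$. Here I read the formula for $\widehat{V}_{P}$ with the factor $1/\hbar$ restored, which is needed for consistency with $\hbar$ elsewhere in the paper. The two Gaussian exponents then combine to $-\frac{1}{2\hbar}(X+iY)x\cdot x$, which gives $\psi_{X,Y}$. The unimodular factor $i^{\gamma}$ only accounts for the sign ambiguity $\pm\widehat{S}$ in $\operatorname*{Mp}(n)$, and for the choice of $m$ if one does not take $m=0$. With the choices above, $\gamma=0$ already works, and any other lift differs by $\pm1=i^{0}$ or $i^{2}$.

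Finally I check consistency with (\ref{sts}) and (\ref{ouips}): the projection of $\widehat{V}_{Y}\widehat{M}_{X^{1/2},0}$ should be the symplectic matrix $S$. Multiplying $V_{Y}M_{X^{1/2}}$ gives
\[
\begin{pmatrix}X^{-1/2}&0\\ -YX^{-1/2}&X^{1/2}\end{pmatrix}.
\]
This agrees with $S$ up to the sign and inversion conventions relating $\psi\mapsto\psi(Lx)$ to $M_{L}$, so that the Wigner covariance (\ref{wigcov}) reproduces (\ref{wxy}) from $W\phi_{0}$. The Wigner check is only a consistency remark and is not needed for the identity, which follows from the direct computation.

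I expect the main obstacle to be bookkeeping of conventions rather than analysis: the missing $\hbar$ in $\widehat{V}_{P}$, the sign in the chirp, and which of $L$ or $L^{-1}$ appears in $M_{L}$. I will fix these so that the direct computation holds, and absorb any residual phase into $i^{\gamma}$.
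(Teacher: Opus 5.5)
Your direct computation is correct. With $\widehat{V}_{Y}\psi=e^{-\frac{i}{2\hbar}Yx\cdot x}\psi$ and $m=0$ you get $\widehat{V}_{Y}\widehat{M}_{X^{1/2},0}\phi_{0}=\psi_{X,Y}$ exactly, so $\gamma=0$. Restoring the $\hbar$ is genuinely needed: without it you would get $\psi_{X,\hbar Y}$, and it is also what makes $\widehat{V}_{P}$ cover $V_{P}$ under (\ref{wigcov}).

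The paper argues differently. It starts from the Gaussian Wigner formula (\ref{wxy}) and factors $G=S^{T}S$ with $S$ as in (\ref{sts}). Covariance (\ref{wigcov}) then gives $W(\widehat{S}\psi_{X,Y})=W\phi_{0}$. It concludes $c\widehat{S}\psi_{X,Y}=\phi_{0}$ with $|c|=1$, because the Wigner transform determines a function up to a phase, and it then calls (\ref{psifi}) obvious.

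Your route is more elementary and self-contained. It needs neither (\ref{wxy}), which is itself usually obtained by a Gaussian integral, nor the uniqueness-up-to-phase property of $W$, and it fixes the phase. The paper's route explains why the result holds and is what the later sections use. The operator is a lift of the symplectic matrix $S^{-1}$ that carries the unit ball onto the ellipsoid $Gz\cdot z\leq1$. This immediately gives the transitivity of $\operatorname*{Mp}(n)$ on Gaussians and fits the Bloch--Messiah reconstruction that follows.

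Your consistency check needs no hedging about ``sign and inversion conventions.'' Inverting the block-triangular matrix (\ref{sts}) gives
\[
S^{-1}=\begin{pmatrix}X^{-1/2} & 0\\ -YX^{-1/2} & X^{1/2}\end{pmatrix}=V_{Y}M_{X^{1/2}}.
\]
This is exactly what the paper's argument predicts: $c\widehat{S}\psi_{X,Y}=\phi_{0}$ means $\psi_{X,Y}=\bar{c}\,\widehat{S}^{-1}\phi_{0}$, and $\widehat{S}^{-1}$ covers $S^{-1}$.
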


\begin{proof}
Combining (\ref{ouips}) and (\ref{wigcov})we have $W(\widehat{S}\psi
_{X,Y})=W\phi_{0}$ hence $c\widehat{S}\psi_{X,Y}=\phi_{0}$ for some complex
number $c$ such that $|c|=1$. Formula (\ref{psifi}) is obvious.
\end{proof}

It easily follows from this result that the metaplectic group
$\operatorname*{Mp}(n)$ acts transitively on the set of Gaussians $\psi_{X,Y}$.

\subsection{The reconstruction problem for $\psi_{X,Y}$; Pauli Partners}

The Pauli problem has a solution for Gaussians $\widehat{S}\psi_{X,Y}$. One
can prove this using the following method (see \cite{gopol1}). Setting
$G=\frac{\hbar}{2}\Sigma^{-1}$ we can rewrite (\ref{wxy}) as
\begin{equation}
W\psi_{X,Y}(z)=\left(  \tfrac{1}{\pi\hbar}\right)  ^{n}e^{-\tfrac{1}{2}%
\Sigma^{-1}z\cdot z}%
\end{equation}
which identifies $\Sigma$ with the covariance matrix of the probability
distribution $W\psi_{X,Y}$. Let us set%
\begin{equation}
\Sigma=%
\begin{pmatrix}
\Sigma_{XX} & \Sigma_{XP}\\
\Sigma_{PX} & \Sigma_{PP}%
\end{pmatrix}
; \label{cov1}%
\end{equation}
we have
\[
\Sigma_{XX}=\Sigma_{XX}^{T},\text{ \ }\Sigma_{PP}=\Sigma_{PP}^{T},\text{
\ }\Sigma_{XP}=\Sigma_{PX}^{T}\text{ }%
\]
where we are using the notation
\[
\Sigma_{XX}=(\sigma_{x_{j}x_{k}})_{1\leq j,k\leq n},\text{ }\Sigma
_{PP}=(\sigma_{p_{j}p_{k}})_{1\leq j,k\leq n},\text{ }\Sigma_{PP}%
=(\sigma_{x_{j}p_{k}})_{1\leq j,k\leq n}.
\]
One checks, using the marginal properties (\ref{wigmarg}), that
\begin{align}
|\psi_{X,Y}(x)|^{2}  &  =\left(  \tfrac{1}{2\pi}\right)  ^{n/2}(\det
\Sigma_{XX})^{-1/2}\exp\left(  -\frac{1}{2}\Sigma_{XX}^{-1}x\cdot x\right)
\label{margauss1}\\
|\widehat{\psi}_{X,Y}(p)|^{2}  &  =\left(  \tfrac{1}{2\pi}\right)  ^{n/2}%
(\det\Sigma_{PP})^{-1/2}\exp\left(  -\frac{1}{2}\Sigma_{PP}^{-1}p\cdot
p\right)  . \label{margauss2}%
\end{align}
These conditions uniquely determine the matrices $\ \Sigma_{XX}$ and
$\Sigma_{PP}$. To find $\Sigma_{XP}$ one uses the Robertson--Schr\"{o}dinger
uncertainty principle in block-matrix form%
\begin{equation}
\Sigma_{XX}\Sigma_{PP}-(\Sigma_{XP})^{2}=\frac{\hbar^{2}}{4}I_{n\times n}
\label{RS}%
\end{equation}
which is equivalent to the symplectic conditions (\ref{cond12}) since
$G=\frac{\hbar}{2}\Sigma^{-1}$ is symplectic and symmetric. Equation
(\ref{RS}) however has multiple solutions $\Sigma_{XP}$. For instance, in the
case $n=1$(\ref{RS}) reduces to
\begin{equation}
\sigma_{xx}\sigma_{pp}-\sigma_{xp}^{2}=\frac{\hbar^{2}}{4} \label{n=1}%
\end{equation}
which has two solutions
\begin{equation}
\sigma_{xp}=\pm(\sigma_{xx}\sigma_{pp}-\frac{\hbar^{2}}{4})^{1/2}, \label{pp}%
\end{equation}
leading to two Gaussians $\psi^{\pm}$ forming what Corbett \cite{Corbett}
calls a \textquotedblleft Pauli partners\textquotedblright%
\begin{equation}
\psi^{\pm}(x)=\left(  \tfrac{1}{2\pi\sigma_{xx}}\right)  ^{1/4}e^{-\frac
{x^{2}}{4\sigma_{xx}}}e^{\frac{\pm i\sigma_{xp}}{2\hbar\sigma_{xx}}x^{2}}.
\label{Gauss1}%
\end{equation}
Thus, knowledge of the marginals of the Wigner transform $W\psi_{X,Y}$ does
nor allow us to recover the function $\psi_{X,Y}$ because we have a
\emph{phase intermediacy}.

\begin{remark}
The situation is similar for the general Radon transform with respect to an
arbitrary Lagrangian plane In fact, let $(\ell,\ell^{\prime})\in
\mathcal{F}_{\operatorname*{Lag}}(n)$. The Wigner transform $W\psi_{X,Y}$ is
determined by the knowledge of the two Radon transforms
\begin{align*}
R_{\ell}(W\psi_{X,Y})(z) &  =\int_{\ell(z)}W\psi_{X,Y}(\lambda)d\lambda\text{
}\\
R_{\ell^{\prime}}(W\psi_{X,Y})(z) &  =\int_{\ell^{\prime}(z)}W\psi
_{X,Y}(\lambda)d\lambda.
\end{align*}
The group $\operatorname*{Sp}(n)$ acting transitively on $\mathcal{F}%
_{\operatorname*{Lag}}(n)$ we can find $S\in\operatorname*{Sp}(n)$ such that
$(\ell,\ell^{\prime})=S(\ell_{X},\ell_{P})$. It is at this point the
transversality condition $\ell\cap\ell^{\prime}=0$ is esasential .
\end{remark}

Setting $S=%
\begin{pmatrix}
A & B\\
C & D
\end{pmatrix}
$ let $\widehat{S}\in\operatorname*{Mp}(n)$ be one of the two metaplectic
operators covering $S$ and
\[
\ell(Sz)=S\ell_{X}+Sz\text{ \ },\text{ \ }\ell^{\prime}(Sz)=S\ell_{P}+Sz.
\]
Using the symplectic covariance formula (\ref{wigcov}) we have
\begin{align*}
\int_{\ell^{\prime}(Sz)}W\psi_{X,Y}(\lambda)d\lambda &  =\int_{\mathbb{R}^{n}%
}W\psi_{X,Y}\left[  S((0,u)+(x,p))\right]  du\\
&  =\int_{\mathbb{R}^{n}}W(\widehat{S}^{-1}\psi_{X,Y})(x,u+p)du
\end{align*}
that is, in view of the first marginal formula (\ref{wigmarg}),
\begin{equation}
\int_{\ell^{\prime}(Sz)}W\psi_{X,Y}(\lambda)d\lambda=|\widehat{S}^{-1}%
\psi_{X,Y}(x)|^{2}.\label{spsi1}%
\end{equation}
Similarly, using the second formula (\ref{wigmarg}),
\begin{equation}
\int_{\ell(Sz)}W\psi_{X,Y}(\lambda)d\lambda=|F\widehat{S}^{-1}\psi
_{X,Y}(p)|^{2}.\label{spsi2}%
\end{equation}
These values allow the determination of $\widehat{S}^{-1}\psi_{X,Y}$ and,
hence, of $\psi_{X,Y}$ (up to an unimodular constant factor).

\subsection{Exact reconstruction using Bloch--Messiah diagonalization}

As the discussion above (formulas (\ref{margauss1})--(\ref{margauss2})) shows
two Lagrangian tomographies using the $x$ and $p$ planes do not suffice to
unambiguously reconstructing a Gaussian, because it leads to "Pauli partners".
The Radon inversion formula (\ref{inverse}) in Proposition \ref{Propinverse}
together with formula (\ref{integralbis}) dhow that, in principle, we need all
$|\widehat{U}\psi|$ for $U\in U(n)$ to reconstruct a function $\psi$.
Equivalently, we need all Lagrangian planes $\ell\in\operatorname*{Lag}(n)$ to
perform a complete tomography. It turns out that in the case of Gaussians we
need much less. The key to our argument the is following diagonalization
result which is a particular case of the \textit{Bloch--Messiah} \cite{Houde}
factorization result)

\begin{lemma}
[Bloch--Messiah]\label{LemmaBM}Let $G$ be a positive definite symmetric
matrix.. There exists $U\in U(n)$ such that$\hbar$
\begin{equation}
G=UG_{\Lambda}U^{T}\text{ \ \ , \ }^{T}G_{\Lambda}=%
\begin{pmatrix}
\Lambda & 0\\
0 & \Lambda^{-1}%
\end{pmatrix}
\label{bm}%
\end{equation}
where $\Lambda=\mathrm{diag}(\lambda_{1},\ldots,\lambda_{n})$ where
$\lambda_{1},\ldots,\lambda_{n}\leq1$ are the smallest eigenvalues of $G$.
\end{lemma}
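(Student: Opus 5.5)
We read the statement with the standing hypothesis of this section: $G$ is not only symmetric positive definite but also symplectic, as for $G=S^{T}S$ in (\ref{sts}). Without that hypothesis the conclusion fails, since $\operatorname{diag}(\Lambda,\Lambda^{-1})$ has eigenvalues in reciprocal pairs. The plan is to build an orthonormal symplectic basis of eigenvectors of $G$ directly, and to let $U$ be the matrix whose columns form this basis. Then $U\in\operatorname*{Sp}(n)\cap O(2n,\mathbb{R})=U(n)$ by (\ref{USPO}), and $U^{T}GU=G_{\Lambda}$, which is (\ref{bm}) after replacing $U$ by $U^{T}$ if necessary.

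\emph{Step 1: spectral pairing.} Since $G$ is symmetric and symplectic, $G^{T}JG=J$ reads $GJG=J$, so $GJ=JG^{-1}$. Hence if $Gv=\lambda v$ then $G(Jv)=JG^{-1}v=\lambda^{-1}Jv$. The orthogonal matrix $J$ therefore maps the eigenspace $E_{\lambda}$ isometrically onto $E_{1/\lambda}$. Eigenspaces for distinct eigenvalues are orthogonal because $G$ is symmetric, and $\mathbb{R}^{2n}=\bigoplus_{\lambda}E_{\lambda}$. Consequently the eigenvalues are $\lambda_{1},\dots,\lambda_{n}\leq1$ together with their inverses, counted with multiplicity, and $E_{1}$ is $J$-invariant.

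\emph{Step 2: choice of a Lagrangian eigenbasis.} Take an orthonormal basis of $\bigoplus_{\lambda<1}E_{\lambda}$ made of eigenvectors. On $E_{1}$, $J$ is an orthogonal complex structure, so $E_{1}$ is a Hermitian space of complex dimension $k=\tfrac12\dim E_{1}$. Choose a unitary basis $g_{1},\dots,g_{k}$ of $E_{1}$. The vectors $g_{j}$ then span a totally real subspace: they are orthonormal and satisfy $Jg_{i}\cdot g_{j}=0$. Together these give orthonormal eigenvectors $e_{1},\dots,e_{n}$ with eigenvalues $\lambda_{j}\leq1$. We must check that $\sigma(e_{i},e_{j})=Je_{i}\cdot e_{j}=0$. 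If $\lambda_{i}\lambda_{j}\neq1$, this holds because $Je_{i}\in E_{1/\lambda_{i}}\perp E_{\lambda_{j}}$. If $\lambda_{i}=\lambda_{j}=1$, it holds by the choice of the $g_{j}$. Finally, the vectors $f_{j}=-Je_{j}$ are orthonormal, lie in $E_{1/\lambda_{j}}$, and are orthogonal to all the $e_{i}$ by the same computation.

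\emph{Step 3: the matrix $U$.} Let $\varepsilon_{1},\dots,\varepsilon_{2n}$ be the canonical basis, so that $\varepsilon_{n+j}=-J\varepsilon_{j}$. Define $U\varepsilon_{j}=e_{j}$ and $U\varepsilon_{n+j}=-Je_{j}$. Then $U$ is orthogonal, since its columns are orthonormal. Also $UJ=JU$ on the basis, because $J^{2}=-I$, so $U$ is symplectic and lies in $U(n)$. By construction $GU=U\operatorname{diag}(\Lambda,\Lambda^{-1})$, that is $G=UG_{\Lambda}U^{T}$ with $\Lambda=\operatorname{diag}(\lambda_{1},\dots,\lambda_{n})$. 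These $\lambda_{j}$ are exactly the $n$ smallest eigenvalues of $G$, counted with multiplicity.

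The main obstacle is the degenerate eigenvalue $1$. There the pairing argument of Step 1 gives no orthogonality between $e_{i}$ and $Je_{j}$, and a careless orthonormal basis of $E_{1}$ is not isotropic. The remedy is the complex-structure argument in Step 2: choose a unitary basis of $(E_{1},J)$ and keep only its real span.
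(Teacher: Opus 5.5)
Your proof is correct. Note that the paper itself gives no argument for this lemma. It calls it a particular case of the Bloch--Messiah factorization and refers to \cite{Birk,Houde}.

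The route implicit there starts from the general factorization $S=U_1DU_2$ of an arbitrary symplectic matrix, with $U_1,U_2\in U(n)$ and $D=\operatorname{diag}(\Lambda,\Lambda^{-1})$. Applying it to $S=G$ and using positive definiteness gives $G=(G^TG)^{1/2}=(U_2^TD^2U_2)^{1/2}=U_2^TDU_2$. That route is instantaneous once the general theorem is available. The general theorem, however, is usually proved either by polar decomposition combined with exactly the diagonalization you establish, or by complex singular-value/Takagi arguments.

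Your direct spectral argument avoids that machinery and is self-contained. You pair $E_\lambda$ with $E_{1/\lambda}$ through $GJ=JG^{-1}$, then build an orthonormal symplectic eigenbasis $\{e_j,-Je_j\}$ whose columns define $U$. It also treats explicitly the only delicate point. At the eigenvalue $1$, orthogonality of eigenspaces says nothing about $\sigma(e_i,e_j)$, and you resolve this by choosing a unitary basis of $(E_1,J)$. This case is easy to overlook in the eigenvector-pairing proofs of the cited references.

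You are also right to add the hypothesis that $G$ is symplectic. It is missing from the statement, but the paper uses it implicitly in the parenthetical remark that follows the lemma, and without it the conclusion is false.

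Your construction gives $G=UG_\Lambda U^T$ directly, as in (\ref{bm}). The paper's subsequent computation actually uses $UGU^T=G_\Lambda$. That form is obtained by replacing $U$ with $U^T\in U(n)$, as you anticipate.
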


\noindent See \cite{Birk,Houde} for proofs. (Recall \cite{Birk} that the
eigenvalues of a symmetric positive symplectic ,matrix occur in pairs
$(\lambda,\lambda^{-1}),$ $\lambda>0$, so if $\{\lambda_{1},\ldots,\lambda
_{n}\}$ \ is the set of smallest eigenvalues of $G$ then $\{\lambda_{1}%
,\ldots,\lambda_{n}$, $\lambda_{1}^{-1},\ldots,\lambda_{n}^{-1}\}$ is the set
of all eigenvalues of that matrix) .

Let $\ell\in\operatorname*{Lag}(n):Ax+Bp=0$; we have, as before, $\ell
=U^{T}\ell_{P}$ where $U=%
\begin{pmatrix}
A & B\\
-B & A
\end{pmatrix}
$. We have
\begin{equation}
R_{\ell}W\psi_{X,Y}(x,p)=|\widehat{U}\psi_{XY}(Ax+Bp)|^{2}%
\end{equation}
where $\widehat{U}\in\operatorname*{Mp}(n)$ covers the symplectic rotation $U$
. Writing
\[
W\psi_{XY}(z)=(\pi\hbar)^{-n}e^{-\frac{1}{\hbar}Gz\cdot z}\text{ \ \ },\text{
\ }G=%
\begin{pmatrix}
X+YX^{-1}Y & YX^{-1}\\
X^{-1}Y & X^{-1}%
\end{pmatrix}
\]
we have, using the Bloch--Messiah formula (\ref{bm}),
\begin{align*}
W(\widehat{U}\psi_{XY})(z)  &  =W\psi_{XY}(U^{T}z)\\
&  =(\pi\hbar)^{-n}e^{-\frac{1}{\hbar}UGU^{T}z\cdot z}\\
&  =(\pi\hbar)^{-n}e^{-\frac{1}{\hbar}(\Lambda x\cdot x+\Lambda^{-1}p\cdot
p)}.
\end{align*}
It follows that the LRT $R_{\ell}W\psi_{X,Y}$ is given by%
\[
R_{\ell}(W(\widehat{U}\psi_{X,Y})(z)=\left.  \left(  \tfrac{1}{\pi\hbar
}\right)  ^{n/2}(\det\Lambda)^{1/2}e^{-\tfrac{1}{\hbar}\Lambda y\cdot
y}\right\vert _{y=Ax+Bp}%
\]
which fully determines the diagonal matrix $\Lambda$ and hence $G=UG_{\Lambda
}U^{T}$ \ so we can explicit calculate
\begin{align*}
G  &  =%
\begin{pmatrix}
A & B\\
-B & A
\end{pmatrix}%
\begin{pmatrix}
\Lambda & 0\\
0 & \Lambda^{-1}%
\end{pmatrix}%
\begin{pmatrix}
A^{T} & -B^{T}\\
B^{T} & A^{T}%
\end{pmatrix}
\\
&  =%
\begin{pmatrix}
A\Lambda A^{T}+B\Lambda^{-1}B^{T} & A\Lambda B^{T}+B\Lambda^{-1}A^{T}\\
B\Lambda A^{T}+A\Lambda^{-1}B^{T} & B\Lambda B^{T}+A\Lambda^{-1}A^{T}%
\end{pmatrix}
.
\end{align*}
Solving the matrix equation \ \
\[%
\begin{pmatrix}
X+YX^{-1}Y & YX^{-1}\\
X^{-1}Y & X^{-1}%
\end{pmatrix}
=%
\begin{pmatrix}
A\Lambda A^{T}+B\Lambda^{-1}B^{T} & A\Lambda B^{T}+B\Lambda^{-1}A^{T}\\
B\Lambda A^{T}+A\Lambda^{-1}B^{T} & B\Lambda B^{T}+A\Lambda^{-1}A^{T}%
\end{pmatrix}
\]
in $X$ and $Y$ yields%
\[
X=\left(  B\Lambda B^{T}+A\Lambda^{-1}A^{T}\right)  ^{-1}\text{ ,
}Y=X(B\Lambda A^{T}+A\Lambda^{-1}B^{T})
\]
which determines the Gaussian $\psi_{X,Y}.$

We remark that the method outlined above is of a purely geometric nature since
it amounts to identifying the ellipsoid $Gz\cdot z\leq$ by the knowledge of
its semi-axes. It is more of a theoretical interest than a practical result
since it requires the knowledge of the symplectic rotation $U$ diagonalizing
$G$. Notice, however, that the eigenvalues of the covariance matrix $\Sigma$
and hence those of $G=\frac{\hbar}{2}\Sigma^{-1}$ do only depend on $G$ and
can be determined as soon as the marginals
\begin{align*}
R_{\ell_{X}}(W(\widehat{U}\psi_{X,Y})  &  =\left(  \tfrac{1}{2\pi}\right)
^{n/2}(\det\Sigma_{XX})^{-1/2}\exp\left(  -\frac{1}{2}\Sigma_{XX}^{-1}x\cdot
x\right) \\
R_{\ell_{P}}(W(\widehat{U}\psi_{X,Y})  &  =\left(  \tfrac{1}{2\pi}\right)
^{n/2}(\det\Sigma_{PP})^{-1/2}\exp\left(  -\frac{1}{2}\Sigma_{PP}^{-1}p\cdot
p\right)
\end{align*}
and are known (formulas (\ref{margauss1}) and (\ref{margauss2})).This is
because the characteristic equation of $\Sigma$ depends on the product
$\Sigma_{XP}^{2}$ only.

\section{Application to Density Operators}

Hilbert--Schmidt operators on $L^{2}(\mathbb{R}^{n})$ are exactly those with
square integrable kernel. In view of the fundamental relation
\begin{equation}
a(x,p)=\int_{\mathbb{R}^{n}}e^{-\frac{i}{\hbar}p\cdot y}K(x+\tfrac{1}%
{2}y,x-\tfrac{1}{2}y)dy\text{.} \label{AK9}%
\end{equation}
between the kernel of an operator $A$ and its Weyl symbol, an operator
$\mathbf{A=}\operatorname*{Op}_{\mathrm{Weyl}}\mathbf{(a)}$ is Hilbert Schmidt
if and only if $a\in L^{2}(\mathbb{R}^{2n})$ as follows from the identity%
\begin{equation}
||a||_{L^{2}(\mathbb{R}^{2n})}=\left(  2\pi\hbar\right)  ^{n/2}||K||_{L^{2}%
(\mathbb{R}^{n}\times\mathbb{R}^{n})}\text{.} \label{kanorm10}%
\end{equation}

\subsection{Wigner bases}

Recall that the cross-Wigner transform of $\psi,\phi\in L^{2}(\mathbb{R}^{n})$
is defined by
\begin{equation}
W(\psi,\phi)(z)=\int e^{-\frac{i}{\hbar}p\cdot y}\psi(x+\tfrac{1}%
{2}y)\overline{\phi(x-\tfrac{1}{2}y)}dy. \label{crosswig}%
\end{equation}
For $\psi=\phi$ it reduces to the usual Wigner transform (\ref{Wigner}). The
mapping $(\psi,\phi)\rightarrow W(\psi,\phi)$ is sesquilinear ad ha%
%TCIMACRO{\U{b4}}%
%BeginExpansion
\'{}%
%EndExpansion
the following main properties: $W(\psi,\phi)=\overline{W(\phi,\psi)}$ and%
\begin{equation}
\int W(\psi,\phi)(z_{0})dz_{0}=(\psi|\phi)_{L^{2}}. \label{wl1l2}%
\end{equation}
The latter follows from the marginal properties%
\begin{align}
\int W(\psi,\phi)(z)dp  &  =\psi(x)\overline{\phi(x)}\text{\ }%
\label{wigjoint1}\\
\int W(\psi,\phi)(z)dx  &  =\widehat{\psi}(p)\overline{\widehat{\phi}(p)}
\label{wigjoint2}%
\end{align}

The following result seems to be implicitly known in the literature. We have
given a detailed proof of it in \cite{Birkbis}:

\begin{proposition}
\label{ONB}Let $(\psi_{j})_{j}$ and $(\phi_{k})_{k}$ \ be arbitrary
orthonormal bases of $L^{2}(\mathbb{R}^{n})$; the vectors
\[
W_{j,k}=(2\pi\hbar)^{n/2}W(\psi_{j},\phi_{k})
\]
form an orthonormal basis of $L^{2}(\mathbb{R}^{2n})$. In particular
$(W_{\psi_{j}}\psi_{k})_{j,k}$ is such a basis.
\end{proposition}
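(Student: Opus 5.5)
The plan has two parts: orthonormality follows from Moyal's identity, and completeness from the identification of $W(\psi,\phi)$ with the kernel of a rank-one operator, via the unitary map behind (\ref{AK9})--(\ref{kanorm10}). Note that in (\ref{crosswig}) the cross-Wigner transform is written without the prefactor $(2\pi\hbar)^{-n}$ of (\ref{Wigner}). I would use the normalized version $W(\psi,\phi)=(2\pi\hbar)^{-n}\int e^{-\frac{i}{\hbar}p\cdot y}\psi(x+\frac12 y)\overline{\phi(x-\frac12 y)}dy$, which is consistent with (\ref{wl1l2}), and check at the end that the factor $(2\pi\hbar)^{n/2}$ in $W_{j,k}$ is exactly what makes the vectors unit vectors.

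\emph{Step 1 (Moyal's identity).} For $\psi,\phi,\psi',\phi'\in L^{2}(\mathbb{R}^{n})$ I would prove
\[
(W(\psi,\phi)|W(\psi',\phi'))_{L^{2}(\mathbb{R}^{2n})}=(2\pi\hbar)^{-n}(\psi|\psi')_{L^{2}}\overline{(\phi|\phi')_{L^{2}}}.
\]
First take Schwartz functions. For fixed $x$, the map $p\mapsto W(\psi,\phi)(x,p)$ is $(2\pi\hbar)^{-n/2}$ times the $\hbar$-Fourier transform of $y\mapsto(2\pi\hbar)^{-n/2}\psi(x+\frac12 y)\overline{\phi(x-\frac12 y)}$. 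Applying Plancherel in $p$ turns the left-hand side into
\[
(2\pi\hbar)^{-n}\iint\psi(x+\tfrac12 y)\overline{\phi(x-\tfrac12 y)}\,\overline{\psi'(x+\tfrac12 y)}\phi'(x-\tfrac12 y)\,dx\,dy.
\]
The linear change of variables $(u,v)=(x+\frac12 y,x-\frac12 y)$ has Jacobian $1$, so the integral factorizes into $(\psi|\psi')\overline{(\phi|\phi')}$. The identity then extends to all of $L^{2}$ by density, since both sides are continuous sesquilinear expressions. Taking $\psi=\psi_j,\ \psi'=\psi_{j'},\ \phi=\phi_k,\ \phi'=\phi_{k'}$ gives $(W_{j,k}|W_{j',k'})=\delta_{jj'}\delta_{kk'}$.

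\emph{Step 2 (completeness).} Define $T:L^{2}(\mathbb{R}^{n}\times\mathbb{R}^{n})\to L^{2}(\mathbb{R}^{2n})$ by
\[
TK(x,p)=(2\pi\hbar)^{-n/2}\int e^{-\frac{i}{\hbar}p\cdot y}K(x+\tfrac12 y,x-\tfrac12 y)\,dy .
\]
This is the measure-preserving change of variables above followed by a partial Fourier transform in $y$, so $T$ is unitary; this is the content of (\ref{kanorm10}). By construction $W_{j,k}=T(\psi_j\otimes\overline{\phi_k})$. Since $(\overline{\phi_k})_k$ is again an orthonormal basis, the products $\psi_j\otimes\overline{\phi_k}$ form an orthonormal basis of $L^{2}(\mathbb{R}^{2n})$, and a unitary map carries an orthonormal basis to an orthonormal basis.

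I would still justify the tensor-basis fact rather than merely cite it. Suppose $F\perp\psi_j\otimes\overline{\phi_k}$ for all $j,k$. By Fubini, for each $k$ the function $g_k(u)=\int F(u,v)\phi_k(v)\,dv$ lies in $L^{2}$ and is orthogonal to every $\psi_j$, so $g_k=0$. Then for almost every $u$, $F(u,\cdot)$ is orthogonal to every $\overline{\phi_k}$, hence vanishes. Thus $F=0$.

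\emph{Last assertion and main obstacle.} The final claim of the proposition, that $(W(\psi_j,\psi_k))_{j,k}$ (suitably normalized) is such a basis, is the special case $\phi_k=\psi_k$. The main obstacle is bookkeeping rather than ideas: the normalization constant, which depends on the prefactor convention in (\ref{crosswig}), and the passage from Schwartz functions to $L^{2}$, where the defining integral must be interpreted as the unitary $T$ applied to $\psi\otimes\overline{\phi}$.
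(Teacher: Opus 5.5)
Your proof is correct. The orthonormality half is the paper's own argument: the paper cites Moyal's identity, and you prove it via Plancherel in $p$ and the unimodular change of variables $(x,y)\mapsto(x+\frac12 y,x-\frac12 y)$.

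For completeness, both arguments rest on the same fact, namely the isometry (\ref{kanorm10}) between Weyl symbols and kernels, but they are organized differently. The paper works with operators. It takes $a\perp W(\psi_j,\phi_k)$ for all $j,k$ and reads these pairings as matrix coefficients of the Hilbert--Schmidt operator $\operatorname*{Op}_{\mathrm{Weyl}}(a)$. That operator then vanishes on a basis, and the final step ``$A=0$ hence $a=0$'' is exactly (\ref{kanorm10}). You work with kernels instead. You write $W_{j,k}=T(\psi_j\otimes\overline{\phi_k})$ for an explicit unitary $T$ and prove by Fubini that the tensor products form an orthonormal basis of $L^{2}(\mathbb{R}^{n}\times\mathbb{R}^{n})$.

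Your route has three advantages. It is self-contained and needs no Weyl calculus. It makes your Step 1 logically redundant, since a unitary map sends an orthonormal basis to an orthonormal basis. And it settles the normalization: you are right that with (\ref{crosswig}) as printed, without the factor $(2\pi\hbar)^{-n}$, the vectors $(2\pi\hbar)^{n/2}W(\psi_j,\phi_k)$ would not be unit vectors. The paper's route is shorter once the Weyl correspondence is available. It also ties directly to the coefficients $c_{jk}=(A\psi_j|\phi_k)_{L^{2}}$ used later for density operators.

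Two minor bookkeeping remarks:
\begin{itemize}
\item Passing from ``$g_k=0$ a.e.\ for each $k$'' to ``for a.e.\ $u$, $F(u,\cdot)\perp\overline{\phi_k}$ for all $k$'' uses that the basis is countable. This holds because $L^{2}(\mathbb{R}^{n})$ is separable.
\item You describe $W(\psi,\phi)(x,\cdot)$ as ``$(2\pi\hbar)^{-n/2}$ times the $\hbar$-Fourier transform of $(2\pi\hbar)^{-n/2}\psi(x+\frac12 y)\overline{\phi(x-\frac12 y)}$''. If the Fourier transform is the unitary one, this carries one factor $(2\pi\hbar)^{-n/2}$ too many. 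The formula you display after applying Plancherel is nonetheless correct.
\end{itemize}
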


\begin{proof}
Using Moyal's identity \cite{Birkbis,WIGNER}we have
\begin{align*}
(W_{j,k}|W_{j^{\prime},k^{\prime}})_{L^{2}}  &  =(2\pi\hbar)^{n}(W(\psi
_{j},\phi_{k})|W(\psi_{j^{\prime}},\phi_{k^{\prime}}))_{L^{2}}\\
&  =(\psi_{j}|\psi_{j^{\prime}})_{L^{2}}\overline{(\phi_{k}|\phi
_{k^{\prime\prime}})_{L^{2}}}%
\end{align*}
hence the $W_{j,k}$ form an orthonormal system in $L^{2}(\mathbb{R}^{2n})$.
Let us show that the $W_{j,k}$ form a complete system (for a different proof
see our monograph \cite{Birkbis}, \S 19.1). Let $a\in L^{2}(\mathbb{R}^{2n})$.
\ and consider the Hilbert--Schmidt operator $A=\operatorname*{Op}%
_{\mathrm{Weyl}}(a)$. We have
\[
(A\psi_{j}|\phi_{k})_{L^{2}}=\int_{\mathbb{R}^{2n}}a(z)W(\psi_{j},\phi
_{k})(z)dz
\]
hence
\[
(a|W(\psi_{j},\phi_{k}))_{L^{2}}=0
\]
for all indices $j,k$ implies that $A\psi_{j}=0$ for all $j$ hence $\ A=0$ and
$a=0$.
\end{proof}

A simple illustration is provided by the Hermite functions and their
cross-Wigner transforms, the Laguerre functions. Recall that the $j$-th
Hermite function is the are the $j$-th eigenfunctions of the operator
$\frac{1}{2}\left(  -\hbar^{2}\partial_{x}^{2}+x^{2}\right)  $it is given by%
\[
\psi_{j}(x)=\hbar^{-1/4}H_{j}(\hbar^{-1/2}x)
\]
where $H_{j}$ and $h_{j}$ are the functions%
\begin{align*}
H_{j}(x)  &  =\sqrt{\frac{1}{2^{j}j!}}\left(  \frac{1}{\pi}\right)
^{1/4}e^{-x^{2}/2}h_{j}\left(  x\right) \\
h_{N}(x)  &  =(-1)^{j}e^{x^{2}}\left(  \frac{d^{j}}{dx^{j}}e^{-x^{2}}\right)
.
\end{align*}
The Hermite functions form an orthonormal basis of $^{2}(\mathbb{R})$. The
cross-Wigner tenesmuses of the $\psi$ are the Laguerre functions
\cite{WIGNER}, given, for $j<k$, by%
\[
W(\psi_{j},\psi_{k})(z)=C_{j,k}\left(  x\text{+}ip\right)  ^{k-j}e^{-\frac
{2}{\hbar}H(z)}L_{j}^{(k-j)}\left(  \frac{4}{\hbar}H(z)\right)
\]
where the constant factor $C_{j,k}$ is given by$C_{j,k}C_{j,k}$%
\[
C_{j,k}=\frac{(-1)^{j}}{\pi\hbar}\sqrt{\frac{2^{k}j!}{2^{j}k!}}\left(
\frac{1}{\hbar}\right)  ^{(k-j)/2}.
\]
The case $j>k$ is obtained using the conjugation formula $W(\psi
,\phi)=\overline{W(\phi,\psi)}$ and%
\[
W(\psi_{j},\psi_{j})(z)=C_{j,j}e^{-\frac{2}{\hbar}H(z)}L_{j}^{0}\left(
\frac{4}{\hbar}H(z)\right)
\]
with $H=\frac{1}{2}(p^{2}+x^{2})$..

\subsection{Application: Hilbert--Schmidt and density operators}

A density operator on $L^{2}(\mathbb{R}^{n})$ is a self-adjoint positive
definite operator with trace one. Density operators are de facto
Hilbert--Schmidt ope3raors; each density operators is in fact the product of
two Hilbert--Schmidt pourers. Let $A=\operatorname*{Op}_{\mathrm{Weyl}}(a)$ be
of trace class; thus $a\in L^{2}(\mathbb{R}^{2n})$. If in addition $a\in
L^{\substack{2\\1}}(\mathbb{R}^{2n})$ then
\begin{equation}
\operatorname*{Tr}(A)=\left(  \frac{1}{2\pi\hbar}\right)  ^{n}\int a(z)dz.
\label{tra1}%
\end{equation}
A \textit{caveat}: the relation $\int a(z)dz.<\infty$ alone does not imply
that $A=\operatorname*{Op}_{\mathrm{Weyl}}(a)$ is of trace class: see our
discussion in \cite{Birkbis}, \S 12.3. Applying the spectral theorem for
self-adjoint compact operators we see that if $A$ is a density operator there
exists an orthonormal system of vectors $(\psi_{j})_{j}$ in $L^{2}%
(\mathbb{R}^{n})$ and nonnegative scalars $(\lambda_{j})_{j}$ such that
\begin{equation}
A=\sum_{j}\lambda_{j}\Pi_{\psi_{j}}\text{ \ },\text{ }\operatorname*{Tr}%
A=\sum_{j}\lambda_{j}=1 \label{spec}%
\end{equation}
where $\Pi_{\psi_{j}}$ is the orthogonal projection of $L^{2}(\mathbb{R}^{n})$
onto the subspace generated by $\psi_{j}$ (counting multiplicities). The
$\lambda_{j}$ are the eigenvalues of $A$. It follows that the Weyl symbol of
$A$ is given by
\begin{equation}
a=\left(  \frac{1}{2\pi\hbar}\right)  ^{n}\sum_{j}\lambda_{j}W\psi_{j}.
\label{symba}%
\end{equation}
It is customary in quantum mechanics to call
\begin{equation}
\rho=(2\pi\hbar)^{n}a=\sum_{j}\lambda_{j}W\psi_{j} \label{wigdi}%
\end{equation}
the "Wigner distribution" of the density operator $A$, in which case trace
formula (\ref{tra1}) reads $\operatorname*{Tr}\rho=1$. It should be pointed
out that the splitting of the Wigner distribution in elementary states is not
unique. In fact, Let $\rho$ be the Wigner distribution of a density operator
$A=\operatorname*{Op}_{\mathrm{Weyl}}(a)$ and $(\psi_{j})_{j}$ an orthonormal
basts of $L^{2}(\mathbb{R}^{n})$. There exists a family of constants $c_{j}$
such tat \
\[
\rho=\sum_{j}c_{j}W\psi_{j}+\sum_{j\neq k}c_{jk}W(\psi_{j},\psi_{k})
\]
we have, using (\ref{wl1l2})
\begin{align*}
\operatorname*{Tr}A  &  =\int\rho(z)dz=\sum_{j}c_{j}\\
c_{j}  &  =(2\pi\hbar)^{n}(\rho|W\psi_{j})_{L^{2}}.
\end{align*}

In practice, density operators are obtained as mixtures states $\phi_{j}$ each
having probability $\rho_{j}$ to occur, and one then defines the Wigner
distribution of this mixed state by $\rho=^{n}\sum_{j}\rho_{j}\phi_{j}$ \ and
one easily shows \cite{Birkbis,WIGNER} that%
\[
A=(2\pi\hbar)^{n}\operatorname*{Op}\nolimits_{\mathrm{Weyl}}(\rho)
\]
is a density operator.

We extend definition (\ref{integral}) of the Lagrangian Radon transform to the
cross Wigner distribution in the obvious way: by definition%
\[
R_{\ell}W/(\psi,\phi)(z)=\int_{\mathbb{R}^{n}}W/(\psi,\phi)((-B^{T}%
u+x,A^{T}u+p)du
\]
and we will write%
\[
R_{\ell}W/(\psi,\phi)(z)=\int_{\ell(z)}W/(\psi,\phi)(\lambda)d\lambda.
\]
Using an argument similar to that of the proof of Proposition \ref{Proprawig}
above one shows \ that for $\psi,\phi\in L^{1}(\mathbb{R}^{n})\cap
L^{2}(\mathbb{R}^{n})$, we have
\begin{equation}
R_{\ell}W(\psi,\phi)(x,p)=\widehat{U}\psi(Ax+Bp)\overline{\widehat{U}%
\phi(Ax+Bp)} \label{pfu}%
\end{equation}
where $\ell\in\operatorname*{Lag}(n):Ax+Bp$. where $\widehat{U}\in
\operatorname*{Mp}(n)$ covers the symplectic rotation $U$ taking $\ell$ to
$\ell_{P}$. The Lagrangian Radon transform of the Wigner distribution of a
density matrix is then defined by superposition. For instance, the spectral
decomposition (\ref{wigdi}) yields
\begin{equation}
R_{\ell}\rho=\sum_{j}\lambda_{j}R_{\ell}W\psi_{j}.
\end{equation}
More generally:

\begin{proposition}
Let $A=\operatorname*{Op}_{\mathrm{Weyl}}(a)$ be a density operator, $a\in
L^{1}(\mathbb{R}^{2n})\cap L^{2}(\mathbb{R}^{2n}).$Let $(\psi_{j})_{j}$ and
$(\phi_{k})_{k}$ \ be two orthonormal bases of $L^{2}(\mathbb{R}^{n})$; The
Lagrangian Radon transform of its Wigner distribution is
\begin{equation}
R_{\ell}\rho=\sum_{j,k}c_{jk}\widehat{U}\psi_{j}(Ax+Bp)\overline
{\widehat{U}\phi_{k}(Ax+Bp)} \label{roll}%
\end{equation}
where $(c)$ is a sequence such that
\begin{equation}
c_{jk}=(2\pi\hbar)^{n}\int\rho(z)W(\psi_{j},\phi_{k})(z)dz. \label{cjk}%
\end{equation}

\end{proposition}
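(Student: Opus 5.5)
The plan is to expand the Wigner distribution $\rho$ in the orthonormal basis of Proposition \ref{ONB}, apply $R_\ell$ term by term, and evaluate each term with formula (\ref{pfu}).

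\emph{Step 1: expansion of $\rho$.} By hypothesis $a\in L^{2}(\mathbb{R}^{2n})$, hence $\rho=(2\pi\hbar)^{n}a\in L^{2}(\mathbb{R}^{2n})$. Proposition \ref{ONB} says that the functions $W_{j,k}=(2\pi\hbar)^{n/2}W(\psi_j,\phi_k)$ form an orthonormal basis of $L^{2}(\mathbb{R}^{2n})$. Therefore
\[
\rho=\sum_{j,k}(\rho|W_{j,k})_{L^2}W_{j,k}=\sum_{j,k}c_{jk}W(\psi_j,\phi_k),
\qquad c_{jk}=(2\pi\hbar)^{n}(\rho|W(\psi_j,\phi_k))_{L^2},
\]
with convergence in $L^2$. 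This is (\ref{cjk}), up to the convention for the complex conjugate in the inner product, which I will match to the paper's sesquilinearity. The coefficients also satisfy $\sum|c_{jk}|^2=(2\pi\hbar)^n\|\rho\|_{L^2}^2<\infty$.

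\emph{Step 2: apply $R_\ell$ termwise.} For each pair $(j,k)$, formula (\ref{pfu}) gives
\[
R_\ell W(\psi_j,\phi_k)(x,p)=\widehat{U}\psi_j(Ax+Bp)\overline{\widehat{U}\phi_k(Ax+Bp)}.
\]
Formally summing these with the weights $c_{jk}$ yields (\ref{roll}).

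\emph{Step 3: justify interchanging $R_\ell$ with the infinite sum.} This is the main obstacle, because $R_\ell$ is a fibre integral over $n$-dimensional affine planes and is not continuous on $L^2(\mathbb{R}^{2n})$. I would first make a symplectic change of variables using the rotation $U$: Wigner covariance (\ref{fa}) turns $R_\ell$ into integration in $u$ along the fibres of $W(\widehat U\cdot)$, i.e.\ a partial integral in the momentum variable. Next, the series $\sum c_{jk}\widehat U\psi_j(y)\overline{\widehat U\phi_k(y)}$ is the kernel diagonal of the Hilbert--Schmidt operator $\widehat U A\widehat U^{-1}$ written in the bases $(\widehat U\psi_j)$, $(\widehat U\phi_k)$. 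Since $A$ is a density operator, hence trace class, this diagonal converges in $L^1(\mathbb{R}^n_y)$, by the standard fact that trace-class kernels have well-defined $L^1$ diagonals. On the other side, the marginal of $\rho\circ U^{-1}$ in $u$ is exactly that diagonal, because $\rho=\sum_j\lambda_jW\psi_j$ is a positive combination with $\sum\lambda_j=1$. Each partial marginal is then controlled in $L^1$, so the two sides agree as $L^1$ functions of $y=Ax+Bp$. If a cleaner route is needed, I would test both sides against Schwartz functions $f(y)$: the pairing $\int R_\ell\rho\, f$ equals $\int\rho(z)f(Ax+Bp)\,dz$, which is continuous in $\rho$ for a suitable weak topology, and this makes the termwise passage legitimate. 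The hypothesis $a\in L^1$ ensures that $R_\ell\rho$ is defined pointwise almost everywhere via Fubini.
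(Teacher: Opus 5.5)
Your Steps 1 and 2 are exactly the paper's proof: you expand $\rho$ in the orthonormal basis of Proposition~\ref{ONB} and apply (\ref{pfu}) term by term. The paper justifies the termwise step only by asserting $\sum_{j,k}|c_{jk}|<\infty$. That claim is false for general pairs of bases: for $A=\Pi_u$ with $u$ a Gaussian, $|c_{jk}|$ is proportional to $|(u|\psi_j)|\,|(u|\phi_k)|$, and $\sum_j|(u|\psi_j)|$ diverges for suitable $(\psi_j)_j$. So you are right to use only square-summability and to isolate the interchange as the real issue. But Step~3, as written, does not close it.

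\textbf{The positivity argument.} You identify the fibre integral of $\rho$ with the diagonal on the grounds that $\rho=\sum_j\lambda_jW\psi_j$ has nonnegative weights summing to $1$. That only shows that $\sum_j\lambda_j|\widehat U\psi_j|^2$ converges in $L^1$. It does not let you move $R_\ell$ inside the sum: the $W\psi_j$ change sign and need not be integrable along the planes $\ell(z)$, so neither monotone nor dominated convergence applies.

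\textbf{Convergence of the double series.} The existence of an $L^1$ diagonal for trace-class kernels does not by itself make the double series (\ref{roll}) converge. It does converge in $L^1$ if you sum over rectangles $j\le N$, $k\le M$, for three reasons:
\begin{itemize}
\item up to the normalizing constant, that partial sum is $\operatorname{diag}(\widehat UP_NAQ_M\widehat U^{-1})$, where $P_N$, $Q_M$ are the orthogonal projections onto $\operatorname{span}\{\psi_j\}_{j\le N}$ and $\operatorname{span}\{\phi_k\}_{k\le M}$;
\item $P_NAQ_M\to A$ in trace norm;
\item the diagonal map, characterized by $\int f\operatorname{diag}T=\operatorname{Tr}(TM_f)$ for $f\in L^\infty$ ($M_f$ being multiplication by $f$), satisfies $\|\operatorname{diag}T\|_{L^1}\le\|T\|_1$.
\end{itemize}

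\textbf{The missing identification.} What remains is to identify the left-hand side with this diagonal. Your phrase ``continuous in $\rho$ for a suitable weak topology'' is exactly where the work lies. The function $z\mapsto f(Ax+Bp)$ is constant along the fibres, hence not in $L^2(\mathbb{R}^{2n})$, so the $L^2$ convergence from Step~1 says nothing about $\int\rho(z)f(Ax+Bp)\,dz$. What you need is
\[
\int\rho(z)f(Ax+Bp)\,dz=C\operatorname{Tr}\bigl(\widehat UA\widehat U^{-1}M_f\bigr)
\]
for $f$ in a separating class, with $C$ fixed by the normalization of $\rho$. The right-hand side is trace-norm continuous in $A$. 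Combined with the rectangular partial sums, this gives (\ref{roll}) as an identity in $L^1$ in the variable $y=Ax+Bp$.

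One clean way to prove the identity is the Fourier slice theorem. Since $\rho\in L^1$, write $R_\ell\rho(x,p)=r(Ax+Bp)$ with $r\in L^1(\mathbb{R}^n)$. The Fourier transform of $r$ at $\eta$ is $\mathcal{F}\rho$ evaluated on the Lagrangian plane $U^T(\mathbb{R}^n\times0)$, that is, $\int\rho(z)e^{-\frac{i}{\hbar}\eta\cdot(Ax+Bp)}dz$. Up to a constant, this is the trace of $A$ against the unitary whose Weyl symbol is $e^{-\frac{i}{\hbar}\eta\cdot(Ax+Bp)}$; both sides are continuous in the frequency and agree as distributions. That unitary is $\widehat U^{-1}M_{e_\eta}\widehat U$ with $e_\eta(y)=e^{-\frac{i}{\hbar}\eta\cdot y}$. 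Hence $\mathcal{F}r(\eta)=C\operatorname{Tr}(\widehat UA\widehat U^{-1}M_{e_\eta})$, which is the Fourier transform of $C\operatorname{diag}(\widehat UA\widehat U^{-1})$.

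Working at the operator level also spares you from applying (\ref{pfu}) to each $\psi_j,\phi_k$. The paper states (\ref{pfu}) only for functions in $L^1\cap L^2$, and arbitrary basis vectors need not satisfy that condition.
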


\begin{proof}
We have
\[
(A\psi_{j}|\phi_{k})_{L^{2}}=\int_{\mathbb{R}^{2n}}a(z)W(\psi_{j},\phi
_{k})(z)dz=c_{jk};
\]
since $A$ is of trace class , $\sum_{j,k}|c_{jk}|<\infty$ (\cite{Birkbis},
Prop. 277) .We have%
\[
\rho=\sum_{j,k}c_{jk}W(\psi_{j},\phi_{k})
\]
hence (\ref{roll}) in view of (\ref{pfu}).
\end{proof}

One may therefore reconstruct $\rho$ in two steps: (1) Fourier decomposition
of the tomogram ; (2) expansion of each Fourier component in the considered
bases. \ From this viewpoint, the family of cross-Wigner distributions
$\{W(f_{j},g_{k})\}$ plays the role of a \textquotedblleft quantum Fourier
basis\textquotedblright\ for phase-space tomography. Rather than
reconstructing the Wigner function pointwise by the inverse Radon transform,
one reconstructs directly the coefficients $c_{jk}$ and hence the density
operator itself.

\begin{acknowledgement}
This work has been financed by the Austrian Research Foundation FWF (Grant
number PAT 2056623).
\end{acknowledgement}

\textbf{DATA\ AVAILABILITY\ STATEMENT}: no data has been used or created.

\textbf{CONFLICT\ OF\ INTERESTS}: there are no conflicts of interests

maurice.de.gosson@univie.ac.at
\end{document}